\def\Z{\mathbb{Z}}
\def\C{\mathcal{C}}
\def\F{\mathcal{R}}
\def\F{\mathcal{F}}
\def\A{\mathcal{A}}
\def\P{\mathcal{P}}
\def\W{\mathcal{W}}
\def\H{\mathcal{H}}
\def\V{\mathcal{V}}
\def\convv{\mathrm{conv _{R^2}}}
\def\conv{\mathrm{conv _{R^d}}}
\def\Right{\mathrm{Right}}
\def\Left{\mathrm{Left}}
\def\South{\mathrm{South}}
\def\East{\mathrm{East}}
\def\North{\mathrm{North}}
\def\West{\mathrm{West}}
\def\In{\mathrm{Kernel}}
\def\Out{\mathrm{Out}}
\def\conv{\mathrm{conv} _{Z ^2}}
\def\Undetermined{\mathrm{Undetermined}}
\def\SE{\mathrm{SE}}
\def\NE{\mathrm{NE}}
\def\NW{\mathrm{NW}}
\def\SW{\mathrm{SW}}
\def\SEV{\mathrm{SEV}}
\def\NEV{\mathrm{NEV}}
\def\NWV{\mathrm{NWV}}
\def\SWV{\mathrm{SWV}}
\def\End{\mathrm{End}}
\def\Start{\mathrm{Start}}
\def\Left{\mathrm{Left}}
\def\Right{\mathrm{Right}}
\def\Hstrip{\mathrm{Hstrip}}
\def\Vstrip{\mathrm{Vstrip}}
\newcommand\bigDiamond{\mathop{\mathpalette\bigDi@mond\relax}}
\newcommand\bigDi@mond[2]{%
  \vcenter{\hbox{\m@th
    \scalebox{\ifx#1\displaystyle 2\else1.2\fi}{$#1\Diamond$}%
  }}%
}
\newtheorem{pb}{Problem}
\begin{document}

\setcounter{page}{113}
\publyear{22}
\papernumber{2155}
\volume{189}
\issue{2}

  \finalVersionForARXIV

\title{Reconstruction of Convex Sets from One or Two X-rays} 

\author{Yan Gerard\thanks{Address  for correspondence:  Universit\'e Clermont Auvergne,
LIMOS, France}\thanks{This work is supported by the French ANR PRC grant ADDS (ANR-19-CE48-0005)}
\\
Universit\'e Clermont Auvergne, LIMOS, France\\
yan.gerard@uca.fr}

\maketitle

\runninghead{Y. Gerard}{Reconstruction of Convex Sets}

\begin{abstract}
	We consider a class of problems of Discrete Tomography which has been deeply investigated in the past: the reconstruction of convex lattice sets from their horizontal and/or vertical X-rays, i.e. from the number of points in a sequence of consecutive horizontal and vertical lines.
The reconstruction of the HV-convex polyominoes works usually in two steps, first the filling step consisting in filling operations, second the convex aggregation of the switching components.
We prove three results about the convex aggregation step:
(1) The convex aggregation step used for the reconstruction of HV-convex polyominoes does not always provide a solution. The example yielding to this result is called \textit{the bad guy} and disproves a conjecture of the domain.
(2) The reconstruction of a digital convex lattice set from only one X-ray can be performed in polynomial time. We prove it by encoding the convex aggregation problem in a Directed Acyclic Graph. (3) With the same strategy, we prove that the reconstruction of fat digital convex sets from their horizontal and vertical X-rays can be solved in polynomial time. Fatness is a property of the digital convex sets regarding the relative position of the left, right, top and bottom points of the set.
The complexity of the reconstruction of the digital convex sets which are not fat remains an open question.
\end{abstract}

\begin{keywords}
Discrete Tomography \and Polyomino \and Digital Convex sets \and Filling operations \and Switching Component
\end{keywords}

\section{Introduction} \label{S:1}

\subsection{About discrete tomography}

In the mid 1990s, researchers in Material Science and especially in three dimensional Electron Microscopy previewed the development of an upcoming technology  able to count the number of atoms of a material crossed by a beam of straight lines \cite{EM}. According to the same strategy than Computerized Tomography, they intended to use this process in order to reconstruct the 3D structure of different  materials (proteins, crystals...) with a very high level of precision. In this new framework, the classical algorithms of Computerized Tomography have been unable to provide satisfying results. The CT algorithms have been designed for the investigation of materials at a scale where it can be assumed to be continuous. These algorithms are very poorly adapted for a level where the set of atoms is closer to a discrete set of points. The discrete nature of the objects to be reconstructed  is the first difficulty which makes CT algorithms ineffective at the atomic scale. A second difficulty comes from the very low number of X-rays -from $2$ to $10$- which can be used in Material Science since the X-rays damage the atomic structure (as comparison, CT-scans provide usually hundreds of X-rays). The third difference with Computerized Tomography is that for the reconstruction of the atomic structure of crystals (see \cite{Bat1,Bat2} for crystalline structures of nano-particules computed with Discrete Tomography in the 2010s years), the atoms are centered on a lattice so that the problem becomes the reconstruction of a lattice set, namely in dimension $2$ a binary matrix.

The development of the technology for counting the number of atoms on straight lines took finally more time than expected  but the impulse was given to explore this new range of questions dealing with the reconstruction of discrete sets of points. The sequence of the numbers of points of the intersections of a discrete set with consecutive parallel lines has been called by keeping the physical term of \textit{X-ray} while the reconstruction of a discrete set from X-rays took the name of \textit{Discrete Tomography} \cite{Gardner,Herman1,Herman2}. Due to the technical system providing the measurements  and the complexity of the considered problems, a special attention is given on the problem in dimension $2$.

\subsection{Problem statement}

An \textit{X-ray} is the sequence of the number of points of the intersection between a given lattice set and the consecutive diophantine lines in a chosen direction. In the particular case of the  horizontal and vertical directions in dimension $2$, we define the horizontal and vertical X-rays of a lattice set $S \subset [0\cdots m-1]\times [0\cdots n-1]$ as the two vectors $H(S)$ and $V(S)$ counting the numbers of points of $S$ in each row and column. Their coordinates are $h_j(S)=|\{ (x,y)\in S | y=j\} |$ for $0\leq j \leq n-1$ and $v_i(S)=|\{ (x,y)\in S | x=i\} |$ for $0\leq i \leq m-1$
 (Fig.~\ref{fig2}).

\begin{figure}[!ht]
\vspace*{-3mm}
  \begin{center}
		\includegraphics[width=0.61\textwidth]{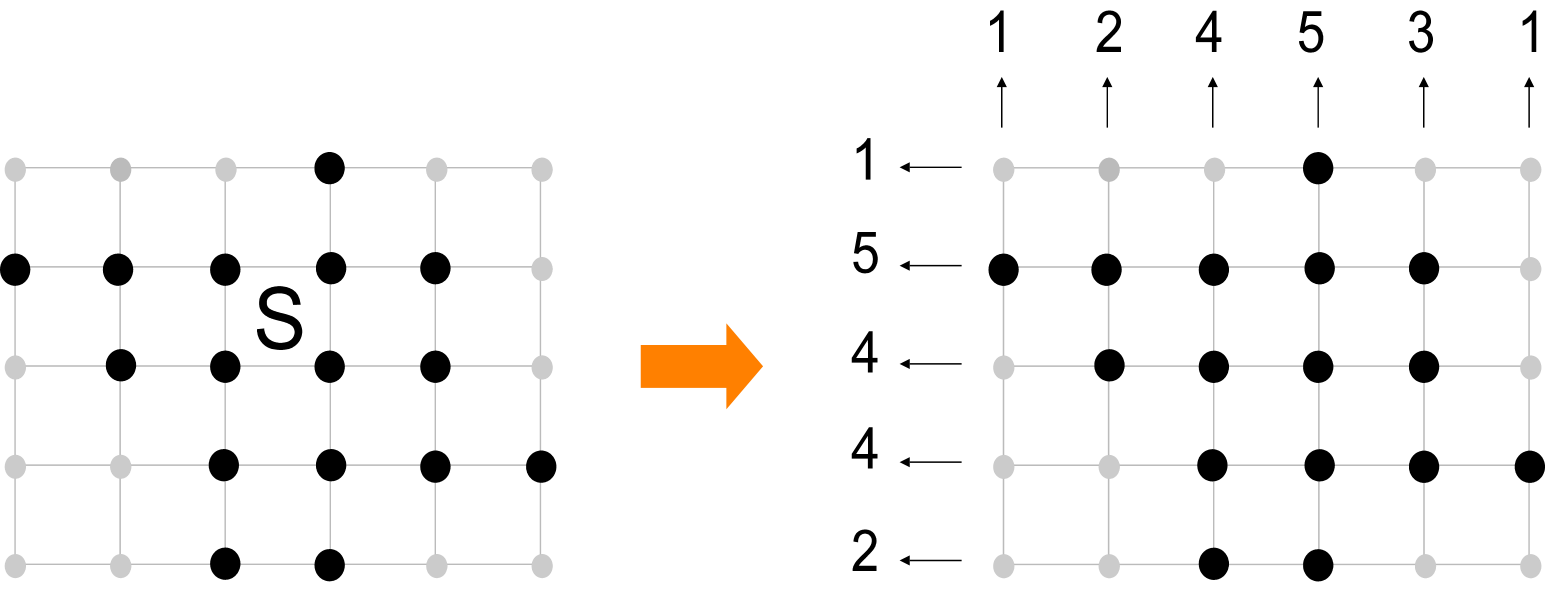}\vspace*{-5.5mm}
	\end{center}
	\caption{\label{fig2}\textbf{The horizontal and vertical X-rays} of the lattice set $S$ are the vectors $V(S)=(1,2,4,5,3,1)$ and $H(S)=(2,4,4,5,1)$.
}\vspace*{-3mm}
\end{figure}

It leads to introduce a generic problem of Discrete Tomography. The question is the existence of a lattice set with given X-rays and belonging to a given class $\A$ of lattice sets. We consider in this paper the problems of reconstruction from one or two X-rays.

\begin{pb}[$DT_\A (v)$]
Given a class $\A$ of finite lattice sets,
\newline
\textbf{Input:} one vector $v\in \Z ^m$.
\newline
\textbf{Output:} does there exist a lattice set $S\in \A$ included in the strip $[0\cdots m-1]\times \Z$ with $V(S)=v$ ?
\end{pb}

\begin{figure}[!b]
  \begin{center}
		\includegraphics[width=0.77\textwidth]{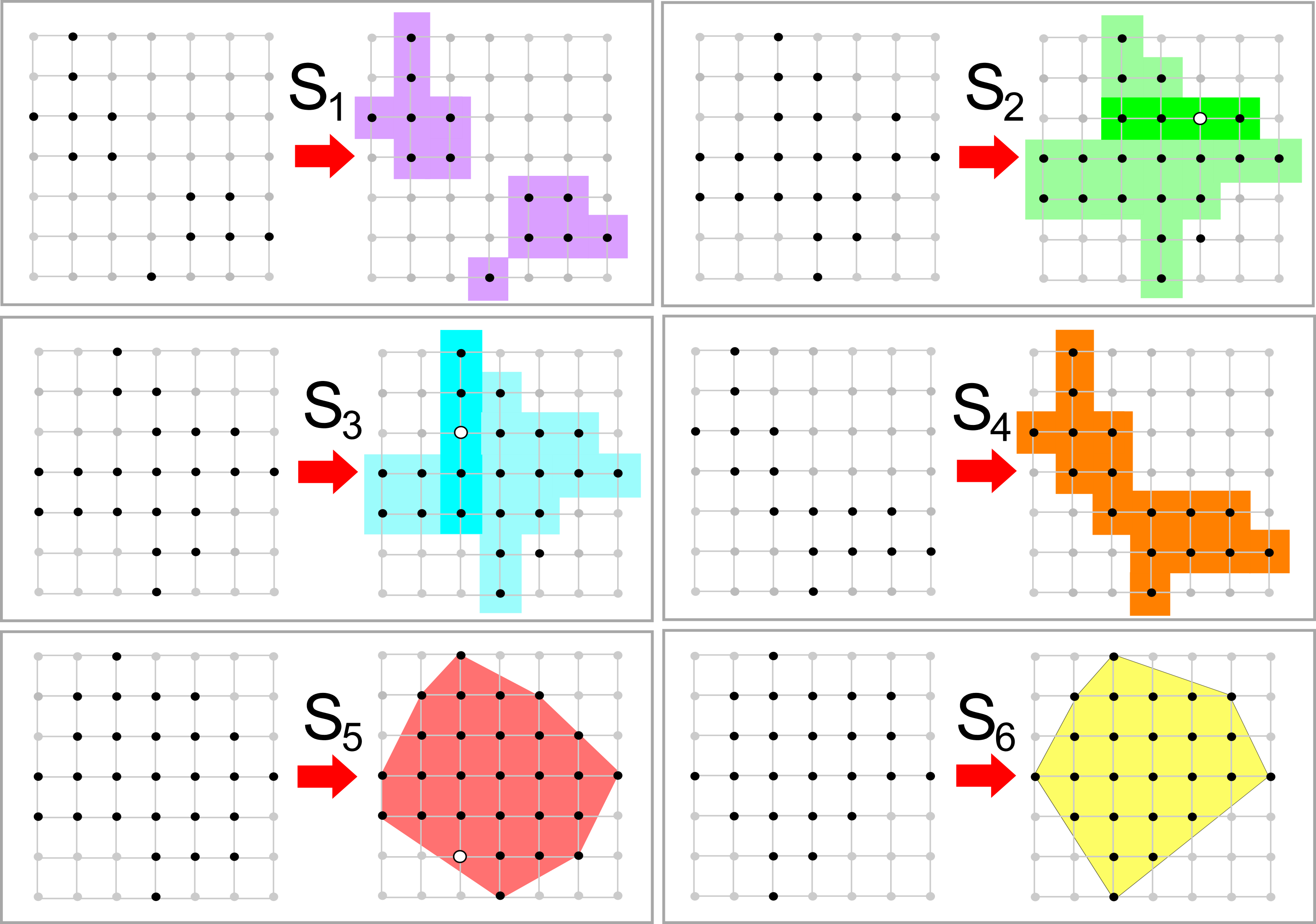}
	\end{center}\vspace*{-3.5mm}
	\caption{\label{convex}\textbf{Main classes of convex lattice sets.} The set $S_1$ is not a polyomino since not connected. The set $S_2$ is not horizontally convex,  $S_3$ is not vertically convex. The set $S_4$ is connected, horizontally and vertically convex. It is an HV-convex polyomino and then belongs to the class $\H \cap \V \cap \P$. The set $S_5$ is not digital convex while $S_6$ is digital convex ($S_6 \in \C$).
}\vspace*{-1mm}
\end{figure}

With a vertical and a horizontal X-ray instead of just a vertical X-ray, we have the following problem:

\begin{pb}[$DT_\A (h,v)$]
Given a class $\A$ of finite lattice sets,
\newline
\textbf{Input:} two vectors $h\in \Z ^n$ and $v\in \Z ^m$.
\newline
\textbf{Output:} does there exist a lattice set $S\in \A$ included in the rectangle $[0\cdots m-1]\times [0\cdots n-1]$ with $V(S)=v$ and $H(S)=h$ ?
\end{pb}

The two problems $DT_\A (v)$ and $DT_\A (h,v)$ are highly dependant on the chosen class
$\A$. In the paper, we focus our attention on convex lattice sets but different definitions might be considered (Fig.~\ref{convex}).

\begin{definition}
A lattice set $S\subset \Z ^2$ is a polyomino if it is $4$-connected namely if any pair of points of $S$ is connected by a sequence of points of $S$ at Euclidean distance $1$. The class of  polyominoes of $\Z^2$ is denoted $\P$.

A lattice set $S\subset \Z ^2$ is horizontally (vertically) convex if its intersection with any horizontal (vertical) line is a set of consecutive points. The class of horizontally (vertically)  convex lattice sets is denoted $\H$ ($\V$ for vertically convex sets). Horizontally and vertically convex sets are simply said HV-convex. Their class is $\H \cap \V$.

A lattice set $S\subset \Z^2$ is \textit{digital convex} if it is equal to its intersection with its real convex hull $S=\conv (S) $ where the operator $\conv (.)$ denotes the integer convex hull namely $\convv (.) \cap \Z ^2$. The class of the digital convex lattice sets is denoted $\C$.
\end{definition}

The class of HV-convex polyominoes is the intersection $\H\cap \V \cap \P$. It is not convex in an ordinary meaning. The digital convex sets are the intersections of convex polygons with the lattice $\Z ^2$.
Notice that the digital convex lattice sets are obviously HV-convex ($\C \subset \H \cap \V$).


\subsection{State of the art}
It is known for a long time that for the whole class denoted $\W$ of all lattice sets, the problem $DT_\W (h,v)$ can be solved in polynomial time by specific algorithms \cite{Gale,Ryser} or by min cut-max flow algorithms (Fig.~\ref{fig-0}).

\begin{figure}[ht!]
  \begin{center}
		\includegraphics[width=0.45\textwidth]{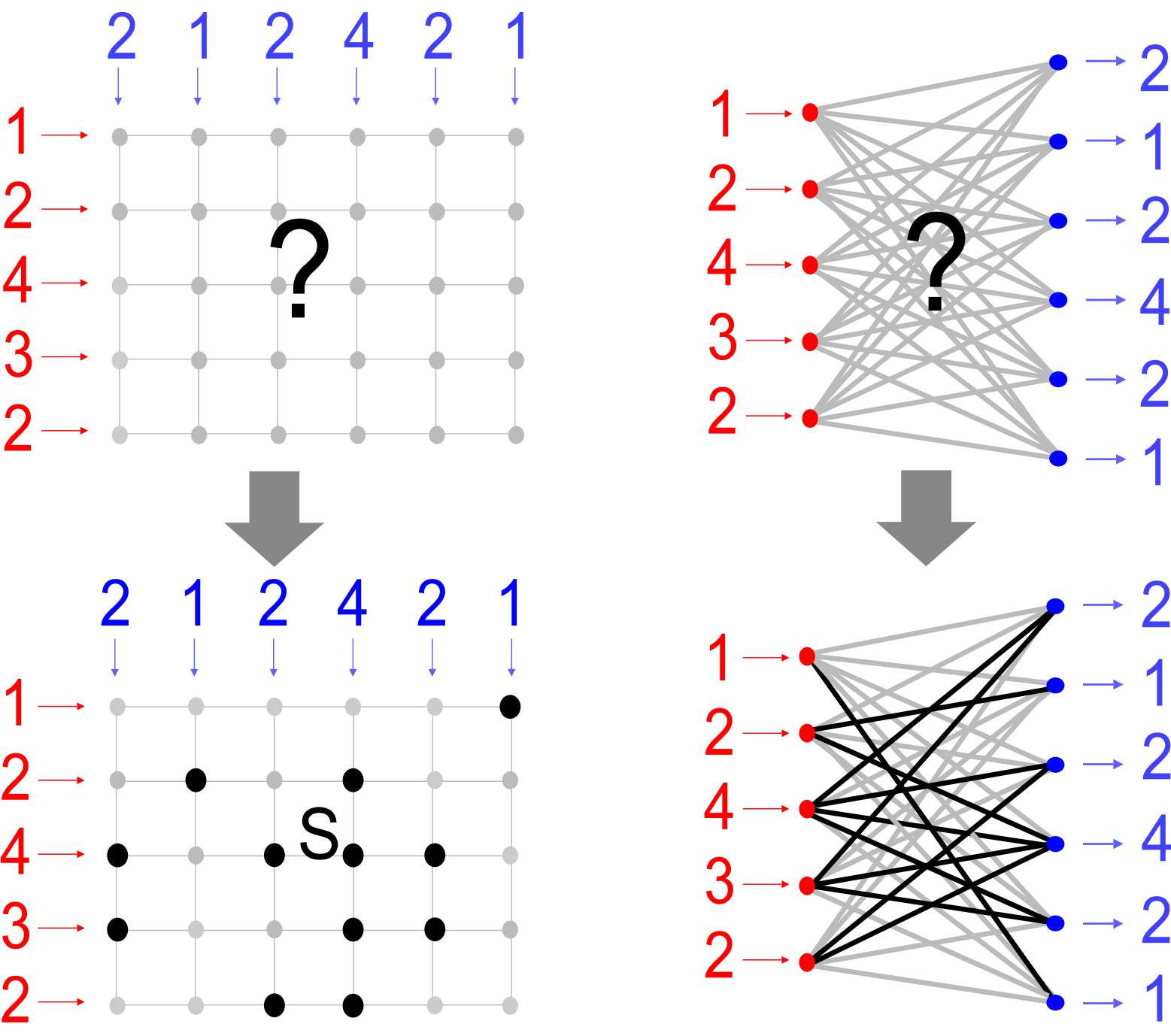}
	\end{center}\vspace*{-3mm}
	\caption{\label{fig-0} \textbf{Reduction of an instance of $DT_\W (h,v)$ to a problem of flow}. 
}
\end{figure}

Many variants of the problem $DT_\W (h,v)$ have been investigated, not only with horizontal and vertical X-rays but in different dimensions, with different  directions of X-rays and different kinds of atoms.
If we focus on the two-dimensional case with horizontal and vertical X-rays, the complexities of the problem  $DT_\A (h,v)$ have already been determined for several classes $\A$.
The problem $DT_\A (h,v)$ is NP-complete:
\begin{itemize}
    \item for the class $\A=\P$ of polyominoes \cite{W1}
    \item for the classes $\A=\H$ or $\A=\V$ of horizontally or vertically convex lattice sets \cite{Nivat}.
    \item for the class $\A=\H \cap \V$ of  HV-convex lattice sets (it is a particular case of puzzle games called \textit{nonograms}) \cite{W1}.
\end{itemize}
These $NP$-hardness results are however counter-balanced by two major results of the field published in two seminal papers:
\begin{itemize}
    \item  The problem $DT_{\H \cap \V \cap \P} (h,v)$ i.e the reconstruction of HV-convex polyominoes can be solved in  polynomial time \cite{Nivat}.
    \item Results of uniqueness have been obtained for the class $\C$ of digital convex lattice sets with different number of directions of X-rays. R. Gardner and P. Gritzmann characterized the sets of $d$ directions for which any digital convex lattice set is uniquely determined by its X-rays  \cite{GG}. 
    With the directions of X-rays providing uniqueness, these results have been completed by a polynomial time algorithm of reconstruction \cite{Bru2}. This algorithm follows the two steps used for the reconstruction of HV-convex polyominoes \cite{Nivat}.
\end{itemize}

\begin{table}[h]
\centering
\caption{\label{tab1}Milestones results and neighboring open questions}
\begin{tabular}{| p{4.3cm} | p{3.5cm} | p{3.5cm} |}
\hline
 \textbf{Class $\A$} & \textbf{Horizontal + vertical  X-rays} & \textbf{4 directions \newline or more}\\
\hline
\textbf{$\H \cap \V \cap \P$\newline (HV-convex polyominoes)} &  $DT_{\H \cap \V \cap \P} (h,v)$\newline   polynomial time  \cite{Barcucci} & open \newline\\
\hline
\textbf{$\C$ \newline (Digital convex sets)} & $DT_\C (h,v)$  \newline open  &  polynomial time \newline(if uniqueness)  \cite{GG,Bru2}\\
\hline
\end{tabular}
\end{table}

\subsection{Three results}

Our goal is to address the problem $DT_\C (h,v)$. It is a twenty years old open question of Discrete Tomography \cite{Dulio4, Niccolo}. Unfortunately, we solve it only partially by providing a polynomial time algorithm for a subclass of the digital convex sets $\C$ that we call \textit{fat}.
We present two intermediary results which have their own interest. Then the article contains three results.


\paragraph{1 - The Bad Guy.}
The first result is about the reconstruction $DT_{\H \cap \V \cap \P} (h,v)$ of HV-convex polyminoes.
It has been observed since many years that the second step of the original algorithm presented in  \cite{Nivat} always provides a solution. It became an oral conjecture. We disprove it with a first counter-example.
This counter-example has been called by Andrea Frosini the \textit{bad guy}.

\paragraph{2 - Only One X-ray.}
The second result states the complexity of the problem of reconstruction of digital convex sets from only one X-ray:

\begin{theorem} \label{t0}
The algorithm \texttt{DAGTomo1} solves $DT_{\C} (v)$ with a worst case time complexity in $O(m^{11} +  m (\sum _{i=0}^{m-1} v_i)^5 )$ where $m$ is the number of columns of the X-ray and $ \sum _{i=0}^{m-1} v_i$ the number of points to determine.
\end{theorem}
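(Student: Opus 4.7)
My plan is to reduce $DT_\C(v)$ to a reachability problem in a directed acyclic graph $G$ whose size is polynomial in $m$ and $N := \sum_{i=0}^{m-1} v_i$, to decide the instance by a topological-sort traversal, and to read off the complexity by counting states and transitions. To set up the reduction I would first exploit the structure of a digital convex set $S$ with $V(S)=v$: since $S\in \C\subset \HV$, each nonempty column is an interval $\{(i,y):b_i\le y\le b_i+v_i-1\}$, so $S$ is determined by the bottom values $(b_i)_i$. Digital convexity further forces the lower boundary of $\convv(S)$ to be a convex lattice polyline $\ell$ with $b_i=\lceil \ell(i)\rceil$, and the upper boundary to be a concave lattice polyline $u$ with $t_i=\lfloor u(i)\rfloor$; both decompose into maximal lattice segments whose rational slopes are strictly monotone. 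Fixing in advance the four extreme columns (leftmost, bottommost, rightmost, topmost) and their $y$-coordinates splits the reconstruction into four monotone arcs that can be built independently and then glued.

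I would then construct $G$ with vertices of the form $(i, b_i, s_b, s_t)$: a column index $i$, the bottom height $b_i$ (the top is forced to $b_i+v_i-1$), and the rational slopes $s_b, s_t$ of the last lattice edges of the lower and upper chains entering column $i$. Since admissible slopes are fractions $p/q$ in lowest terms with $|p|,q=O(m)$, each slope has $O(m^2)$ choices, while $b_i$ ranges over $O(N)$ values. Arcs of $G$ advance from column $i$ to column $i+1$ and are admitted only when the slopes change monotonically (strictly increasing for $s_b$, strictly decreasing for $s_t$) and when $b_{i+1}$ and $t_{i+1}$ satisfy the ceiling/floor relations against the current segments of $\ell$ and $u$. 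A valid reconstruction then corresponds to a path in $G$ from an initial state at the leftmost occupied column to a terminal state at the rightmost one, detectable in time linear in $|G|$ via topological sorting; the answer to $DT_\C(v)$ is \textbf{yes} exactly when, across the enumerated extreme configurations, at least one such path exists.

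The bound $O(m^{11}+m N^5)$ of Theorem~\ref{t0} then follows by combining the $O(m^4)$ enumeration of the four extreme columns, the further $O(N)$-type enumeration of the extreme heights, the state count of $G$ (products of $O(m)$ columns, $O(N)$ heights and $O(m^2)$ slopes for each of the two chains), and the $O(1)$ cost of validating each transition. The main technical obstacle, in my view, is the proof of a \emph{locality lemma}: that enforcing the ceiling/floor rounding at each consecutive-column transition, together with the monotone slope change, is equivalent to the global digital convexity condition $S=\conv(S)$, so that no interior lattice point of $\convv(S)$ can be missed outside of the local checks encoded by $G$. Once this equivalence is in place, correctness of \texttt{DAGTomo1} and the announced worst-case running time follow from a routine DAG analysis, yielding Theorem~\ref{t0}.
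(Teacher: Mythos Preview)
Your proposal has two substantive gaps that the paper's argument addresses but yours does not.

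\textbf{Missing normalization of the search space.} You assert that $b_i$ ranges over $O(N)$ values and that admissible slopes are fractions $p/q$ with $|p|,q=O(m)$. Neither is true a priori: the set of solutions of $DT_\C(v)$ is invariant under vertical translations \emph{and} under vertical shears $(x,y)\mapsto(x,y-x)$, so without further normalization the vertical extent of a solution (hence both the range of $b_i$ and the numerators $|p|$ of edge slopes) is unbounded in terms of $m$ and $N$. The paper's proof first uses translation to pin the left bottom point at the origin and then uses shearing to force the right bottom point into $\{m-1\}\times[0,m-1)$; only after this reduction does one get a bounded region $R_v$ with $|R_v|=O(m^2+N)$ lattice points, and it is the bound $k:=|R_v|$ that drives the complexity $O(mk^5)=O(m^{11}+mN^5)$. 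Your enumeration of ``four extreme columns and their $y$-coordinates'' does not replace this step, and without it your state space is not polynomial.

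\textbf{Insufficient DAG state.} Your vertices $(i,b_i,s_b,s_t)$ record only the current slope of each boundary chain. But to check the rounding relations $b_{i+1}=\lceil\ell(i+1)\rceil$ and $t_{i+1}=\lfloor u(i+1)\rfloor$ at the next column you must know the actual line $\ell$ (resp.\ $u$), i.e.\ an anchor vertex of the current edge, not just its slope: the value $\ell(i)$ is generally not an integer, so $(i,b_i)$ does not lie on $\ell$ and cannot serve as anchor. Likewise, deciding when an edge may legally terminate (at a lattice vertex) and a new, strictly steeper edge may begin requires knowing where the current edge started. This is exactly why the paper does \emph{not} sweep column by column but instead takes as DAG nodes \emph{quads}, i.e.\ pairs of full lattice segments $(\underline s,\overline s)$ with endpoints in $R_v$; edges of the DAG replace one segment by a consecutive one with a convex turn. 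That representation makes both the X-ray check on the shared columns and the convexity check local, which is the content of your ``locality lemma''; with your state it is not.

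In short, the overall DAG-reachability strategy is the right idea, but to make it work you need (i) the translation/shear normalization to bound the region of candidate vertices, and (ii) segment-level rather than column-level states. Once those two changes are made you essentially recover the paper's \texttt{DAGTomo1}, with $O(k^4)$ nodes, $O(k)$ successors each, and $O(m)$ time per X-ray check, yielding $O(mk^5)=O(m^{11}+mN^5)$.
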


The result is interesting by itself but also by the strategy of the algorithm. It reduces the computation of a solution to the research of a path in Directed Acyclic Graph of polynomial size.

\paragraph{3 - With Two X-rays and a Fatness Property.}
The third result is about the reconstruction of digital convex lattice sets from horizontal and vertical X-rays namely $DT_{\C} (h,v)$. We provide a polynomial time algorithm called \texttt{DAGTomo2} for reconstructing the fat digital convex sets  (Fig.~\ref{fat}).
The fatness of a digital convex set is related to the positions of its points with minimal and maximal $x$ and $y$ coordinates. These extreme points are the \textit{feet} of the lattice set (Fig.~\ref{feet}).

\begin{figure}[ht!]
\vspace*{1mm}
  \begin{center}
		\includegraphics[width=0.55\textwidth]{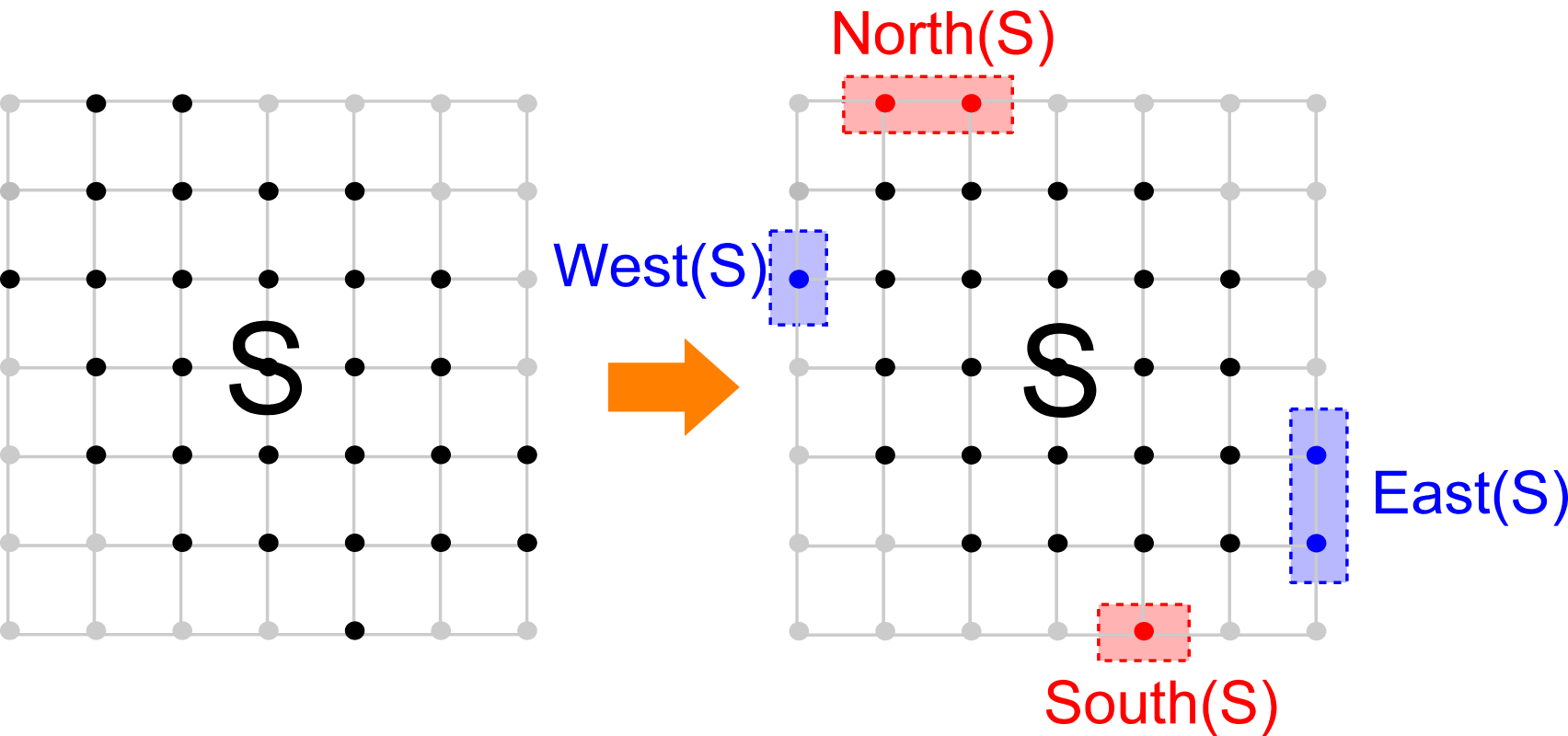}
	\end{center}\vspace*{-5mm}
	\caption{\label{feet} \textbf{The four feet} of a lattice set $S$ are the subsets denoted $\South$, $\West$, $\North$ and $\East$.
}
\end{figure}

\begin{definition}\label{regular}
Given a lattice set $S \subset \Z^2$, the South, West, North and East feet are \newline
$\South(S)=\{ (x,y)\in S | \forall (x',y')\in S, y'\geq y  \}$, \newline
$\West(S)=\{ (x,y)\in S | \forall (x',y')\in S, x'\geq x  \}$, \newline
$\North(S)=\{ (x,y)\in S | \forall (x',y')\in S, y'\leq y  \}$, \newline
$\East(S)=\{ (x,y)\in S | \forall (x',y')\in S, x'\leq x  \}$.\newline
The lattice set $S$ is thin if there exists $(X,Y)\in \Z^2$ such that the feet are strictly located in diagonally opposite quadrants of $(X,Y)$ i.e verifying\newline
either $x(\South(S)) < X < x(\North(S))$ and $y(\West(S)) < Y < y(\East(S))$,
\newline or $x(\South(S)) > X > x(\North(S))$ and $y(\West(S)) > Y > y(\East(S))$.\newline
If $S$ is not thin, it is fat (Fig.~\ref{fat}).
\end{definition}

\begin{figure}[ht!]
\vspace*{3mm}
  \begin{center}
		\includegraphics[width=0.94\textwidth]{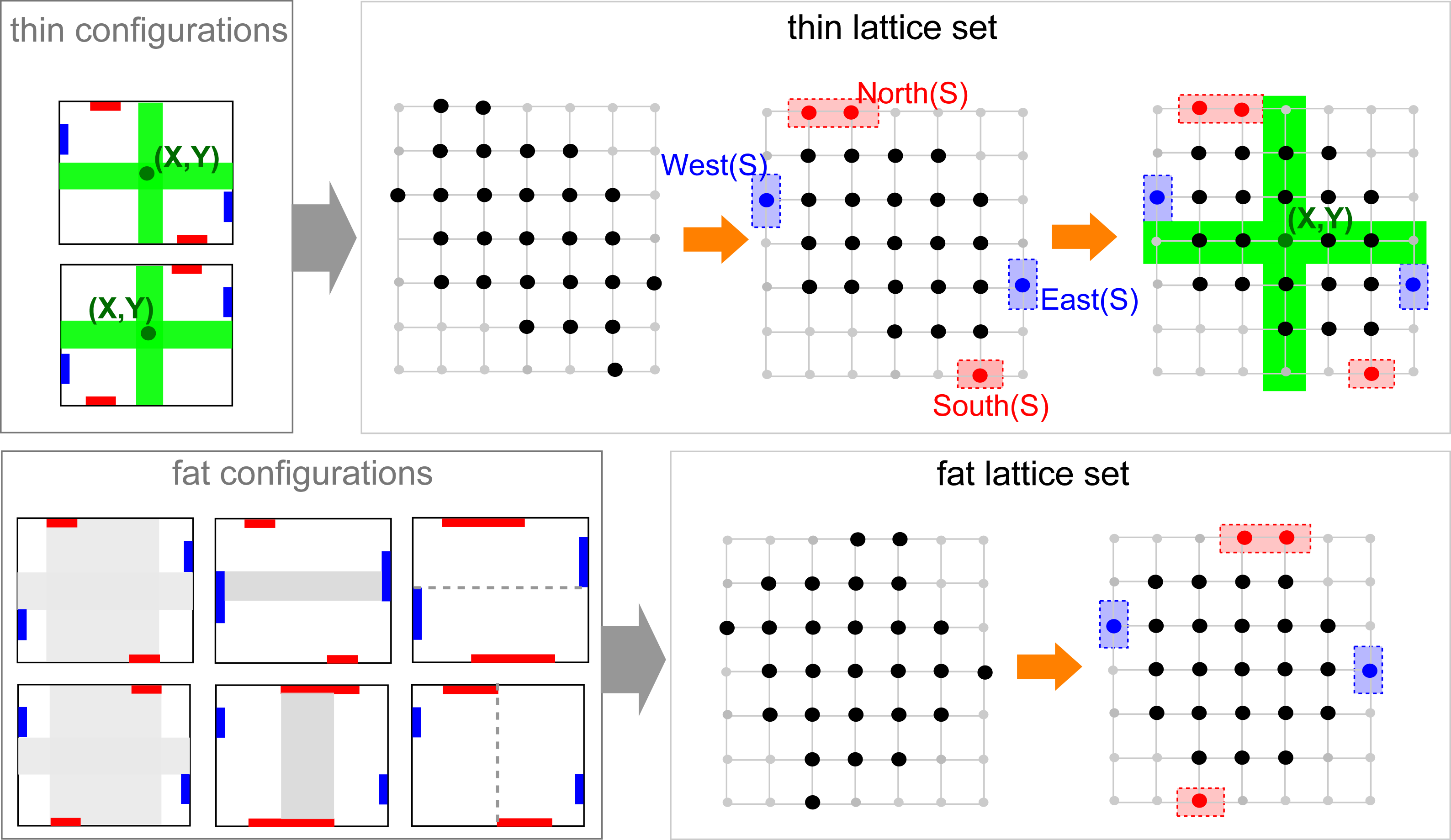}
	\end{center}\vspace*{-2mm}
	\caption{\label{fat} \textbf{Thin VS fat lattice sets.} The fatness/thinness of an HV-convex lattice set depends on the relative positions of its feet. A lattice set is thin if  there exists an integer point $(X,Y)$ (represented by the green cross) strictly separating the pairs of feet in diagonally opposite quadrants. Otherwise it is fat.
}\vspace*{-3mm}
\end{figure}

The class of the fat lattice sets (Fig.~\ref{fat}) is denoted $\F$. Then the class of the fat digital convex sets is $\C \cap \F$.
The last result of the paper is the following  theorem:

\begin{theorem}\label{t1}
The algorithm \texttt{DAGTomo2} solves $DT_{\C \cap \F} (h,v)$ with a worst case time complexity in $O(m^7n^7)$.
\end{theorem}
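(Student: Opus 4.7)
}
The plan is to follow the same DAG-encoding paradigm used in Theorem~\ref{t0}, but to split the reconstruction of $S\in\C\cap\F$ into four monotone arcs by first guessing the four feet $\South(S),\West(S),\North(S),\East(S)$. Because $S$ is digital convex, its boundary in the integer convex hull decomposes canonically into four staircases SW, NW, NE, SE (each one joining two consecutive feet), and each staircase is an $x$- and $y$-monotone lattice path whose edge vectors have strictly turning slopes. Reconstructing $S$ reduces to reconstructing four such paths with prescribed endpoints whose horizontal and vertical increments sum, column by column and row by row, to the residual X-rays obtained from $(h,v)$ after subtracting the contributions known a priori from the feet positions and from the interior ``core'' forced by the feet.

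The first step will be to enumerate all admissible feet configurations. Since each foot is a nonempty horizontal or vertical run, it is determined by $O(mn)$ data, and a naive enumeration of the four feet costs $O(m^2n^2)$ after some straightforward bookkeeping (abscissae and ordinates in the proper ranges, consistency with the totals $\sum h_j=\sum v_i$). For each such guess, the fatness assumption is used in a crucial way: it guarantees that the four quadrants delimited by the feet together with the axis-aligned bounding box of the feet tile the bounding rectangle of $S$ in such a way that each column (resp.\ row) is touched by at most \emph{two} of the four arcs, and moreover in each column (resp.\ row) the two contributions occupy complementary intervals whose total length equals the prescribed value $v_i$ (resp.\ $h_j$). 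This is exactly what fails for thin sets, where a single column or row can be pierced by three or four arcs with intricate interleaving. Under fatness, the four arc-reconstruction subproblems decouple after a one-shot transfer of the residual X-rays between adjacent quadrants.

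Each arc subproblem is then solved by a DAG construction in the spirit of \texttt{DAGTomo1}. The vertices of the DAG encode the state ``current lattice point on the arc together with the last edge vector used'', with at most $O(m n)$ current points and $O(mn)$ admissible edge vectors, so $O(m^2n^2)$ vertices; an arc is a directed path obeying the convexity constraint (strict monotonicity of slopes) whose cumulative horizontal and vertical increments match the residual $v$ and $h$ assigned to its quadrant. A path in the DAG is sought by dynamic programming, and its existence is equivalent to the existence of an arc realizing the residual X-rays. Combining the four arcs, together with the interior rows and columns forced by the feet, produces a candidate $S$, whose digital convexity, HV-convexity and $H(S)=h$, $V(S)=v$ are verified in linear time; by the decoupling lemma above, any $S\in\C\cap\F$ with X-rays $(h,v)$ and the guessed feet is produced by some path quadruple, so the algorithm is complete.

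For the complexity, the enumeration of feet configurations contributes $O(m^2n^2)$. For each configuration, the DAG has size polynomial in $mn$ and, by an analysis analogous to the one behind Theorem~\ref{t0}, the shortest/feasible path can be found in time $O((mn)^5)$ per arc, dominated by the bookkeeping of residual cumulative X-rays along the path. Since there are four arcs and the products of polynomials collapse to the stated bound, the overall running time is $O(m^7n^7)$. The main obstacle I anticipate is establishing rigorously the decoupling lemma under fatness: one has to show that for every fat digital convex $S$ there is a canonical partition of the columns and rows into (at most) two arc-contributions each, that this partition is computable from the feet alone, and that conversely any quadruple of arcs satisfying the induced residual subproblems glues back into a digital convex set. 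The remainder of the argument is then a routine, if tedious, DAG-dynamic-programming construction and complexity bookkeeping.
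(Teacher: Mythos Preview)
Your decoupling lemma is the crux, and it does not hold in the form you need. In the fat configuration the central horizontal and vertical strips $\Hstrip$ and $\Vstrip$ are indeed fully determined, but every column $i$ to the left of $\Vstrip$ imposes the single joint constraint $y_{\NW}(i)-y_{\SW}(i)+1=v_i$ on the NW and SW arcs simultaneously, and every row $j$ above $\Hstrip$ imposes $x_{\NE}(j)-x_{\NW}(j)+1=h_j$ coupling NW and NE. There is no canonical way to split $v_i$ into a ``NW share'' and a ``SW share'' computable from the feet alone: any such split amounts to fixing the absolute position of one arc, which then forces the other, and both must remain convex. The four arcs therefore form one coupled system, not four independent subproblems, and a ``one-shot transfer of residual X-rays'' cannot exist without already knowing a solution. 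Your own caveat at the end is not a technicality to be filled in later; it is the heart of the problem.

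The paper's \texttt{DAGTomo2} handles this coupling by refusing to separate the arcs. After a filling step (which you omit but which is needed to define the potential-vertex sets $\NW',\NE',\SW',\SE'$ and to certify the two central strips), the DAG has as nodes \emph{octagons}: quadruples of segments $(s_{\NW},s_{\NE},s_{\SW},s_{\SE})$ with endpoints in the four potential-vertex sets. An octagon is admitted as a node only if, for every row in $rows(s_{\NW})\cap rows(s_{\NE})$ and $rows(s_{\SW})\cap rows(s_{\SE})$ and every column in the analogous intersections, the number of lattice points between the two relevant segments equals the prescribed X-ray value; this is exactly where the joint constraint is enforced. An edge replaces one segment by a consecutive one with a convex turn. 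A path from $\Start$ (all four segments touching $\Hstrip$) to $\End$ (all four touching $\Vstrip$) sweeps every undetermined row and column, certifying both X-rays. There are $O(m^4n^4)$ octagons, each with $O(mn)$ successors, so the search costs $O(m^5n^5)$ per feet guess and $O(m^7n^7)$ overall. The octagon device is precisely what replaces your decoupling lemma.
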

The time complexity of \texttt{DAGTomo2} is high but polynomial. The complexity of the reconstruction of the thin  digital convex sets $DT_{\C \setminus \F} (h,v)$ remains an open question.\\

\subsection{Plan}

The paper is organized with one section per result. The second section is focused on the reconstruction of HV-convex sets. It presents the bad guy counter-example and introduces the material necessary to prove Theorem \ref{t1}.

\medskip
The third section  proves Theorem \ref{t0} by presenting a polynomial time algorithm \texttt{DAGTomo1} for solving $DT_{\C} (v)$.
The fourth section presents the polynomial time algorithm \texttt{DAGTomo2} for solving $DT_{\C\cap \F} (h,v)$. It proves Theorem \ref{t1}.


\section{The structure of the bad guy counter-example}

This section is focused on the reconstruction of HV-convex polyominos. It summarizes the results of the two previous episodes \cite{Nivat} and \cite{RSC} by defining the \textit{switching components} and provides a counter-example of a conjecture.

\subsection{Episode 1: the reconstruction procedure with its filling/aggregation strategy}

The milestone paper \cite{Nivat} proved that the problem $DT_{\H \cap V \cap \P} (h,v)$ can be solved in polynomial time. There exists an extensive literature around this problem \cite{Dulio2,Dulio3} and several algorithms have been derived from this algorithm \cite{Dulio1,Bru2,Frosini10,Sara}.
We recall its two steps strategy and how it deals with the ambiguities of the reconstruction.

The strategy of the algorithm is to assume that some points of the solution are known and then to perform a reconstruction procedure. The given subset of a solution $S$ is denoted $\In _0$.
The algorithm tries several subsets $\In _0$ and for each, the algorithm repeats the reconstruction procedure.

\paragraph{Choosing the Feet.} The first set of points $\In _0$ assumed to belong to the solution is not chosen randomly. The original strategy of \cite{Nivat} is to choose as set $\In _0$ the four feet $\North$, $\South$, $\East$, and $\West$  (Fig.~\ref{feet}).
The $\North$ and $\South$ feet might have at most $m$ different positions while the $\East$ and $\West$ feet might have at most $n$ different positions. It leads us to perform the reconstruction procedure for at most $m^2n^2$ different initial kernels $\In _0$ (Fig.~\ref{fig-feet}).

\begin{figure}[ht]
\vspace*{1mm}
  \begin{center}
		\includegraphics[width=0.95\textwidth]{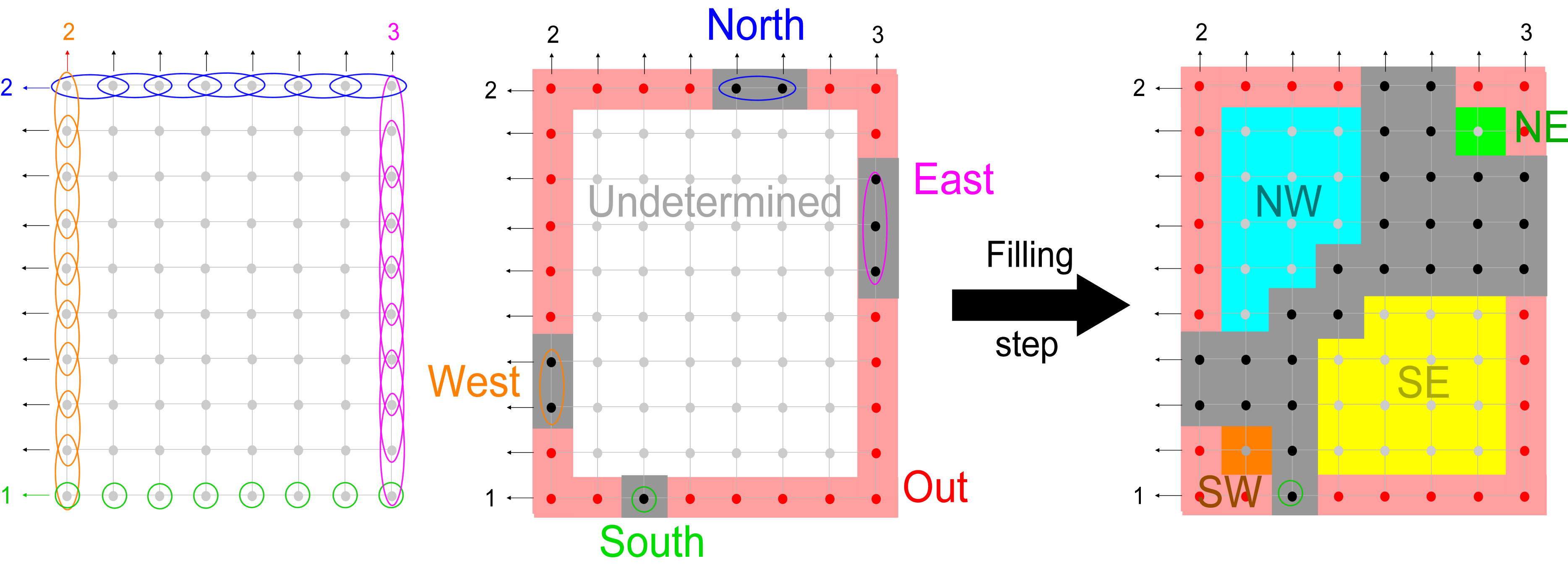}
	\end{center}\vspace*{-5mm}
	\caption{\label{fig-feet} \textbf{The feet and the corners}. The kernel is initialized by choosing the positions of the four feet $\North$, $\South$, $\East$, and $\West$. At each stage, the undetermined points are represented by small grey disks. The kernel points are represented by small black disks in grey cells. The excluded points are  red disks in pink cells. After the filling step, the set of the undetermined points is $4$-connected. It provides a partition of the undetermined points in $4$ corners.
}\vspace*{-1mm}
\end{figure}

We present now the reconstruction procedure. It uses a partition of the grid in three sets presented in the following paragraph. The two steps of the reconstruction procedure i.e the filling and aggregation steps are presented in the later.

\paragraph{Kernel, Excluded and Undetermined Points.}

During the reconstruction procedure, the current knowledge of the solution is described by a partition of the rectangular region of interest $[0\cdots m-1]\times [0\cdots n-1]$ in three sets denoted $\In$, $\Out$, and $\Undetermined$.
\begin{itemize}
    \item The kernel $\In$ contains the integer points which are currently known to belong to all solutions (up to the considered subset $\In _0$).
    \item The set $\Out$ contains the integer points which are currently known to be excluded from all solutions.
    \item The set of the integer points whose status \textit{In or Out of a solution} has not yet  been determined is denoted $\Undetermined$. It is also often called the \textit{shell}.
\end{itemize}

For each execution of the reconstruction procedure, the kernel $\In$,  the set of the excluded points $\Out$ and the set $\Undetermined$ of the undetermined points are respectively initialized with $\In \leftarrow \In _0$, $\Out\leftarrow \emptyset$ and $\Undetermined \leftarrow [0\cdots m-1]\times [0\cdots n-1] \setminus \In _0$.

\paragraph{Filling Step.}

Once the three sets $\In$, $\Out$ and $\Undetermined$ have been initialized, the prescribed X-rays and the prescribed properties of the solution ($4$-connectivity and HV-convexity) can be used to decrease the set of the undetermined points. Some of them are added to the current solution $\In$ while some other points cannot belong to the solution and are added to $\Out$. Then the new determined points allow us to determine new points. The algorithm repeats this process until no undetermined point could be determined. This step of the reconstruction procedure from the initialization to the stage where no undetermined point can be determined without ambiguity is called the \textit{filling step}.

We refer to   \cite{Bru,Bru2} for a  complete presentation of the filling operations and suitable data structures.

\begin{figure}[ht]
  \begin{center}
		\includegraphics[width=0.98\textwidth]{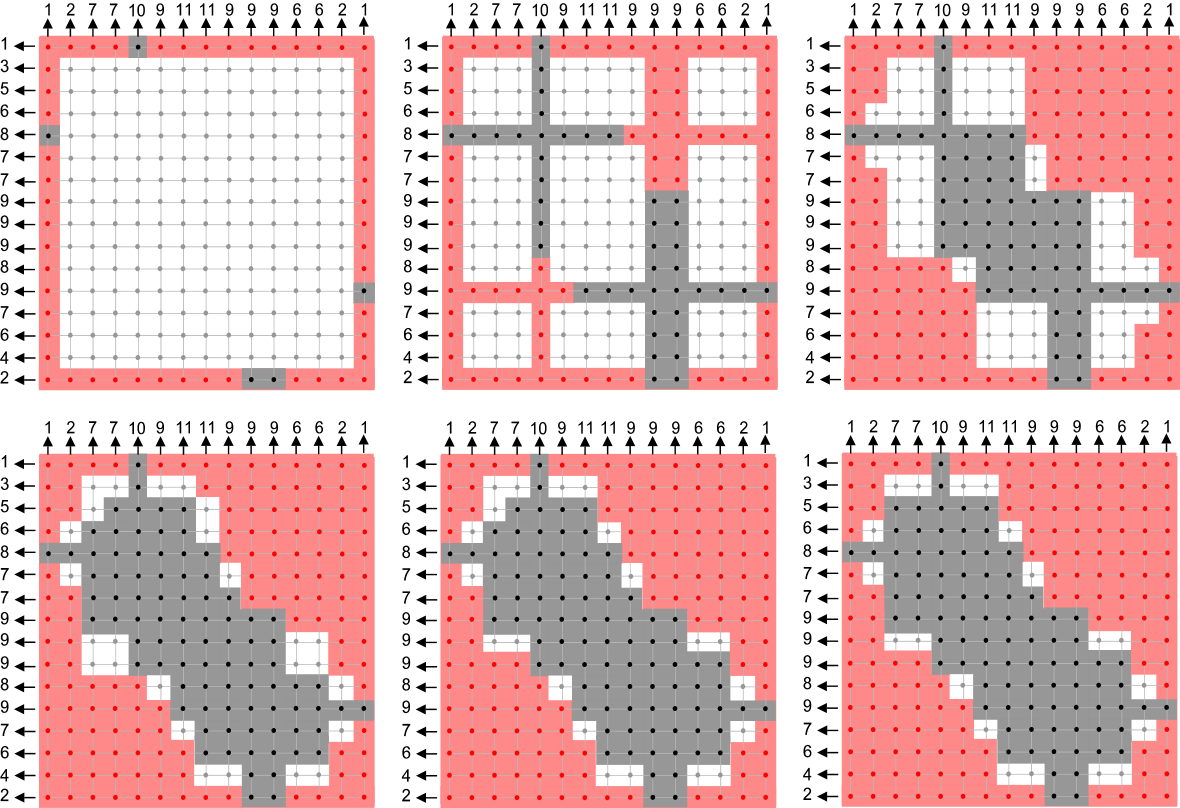}
	\end{center}\vspace*{-2mm}
	\caption{\label{fig-6} \textbf{Filling step.} On the bad guy instance starting from a given position of the feet, we show the partition of the region of interest in $\In \cap \Out \cap \Undetermined$ at different stages. 
	}\vspace*{-1mm}
\end{figure}

The filling operations run until either finding a contradiction (a point of $\In$ that should be also assigned to $\Out$ or conversely) or until no undetermined point could be determined without ambiguity.

We illustrate the filling step on the \textit{bad guy} instance
$DT_{\H \cap \V \cap \P} (h_0,v_0)$ with
\begin{equation}\label{bad1}
    h_0=(2,4,6,7,9,8,9,9,9,7,7,8,6,5,3,1)
\end{equation} and
\begin{equation}\label{bad2}
v_0=(1,2,7,7,10,9,11,11,9,9,9,6,6,2,1).
\end{equation}
The chosen  position of the feet is  $\South = \{ (10,0), (11,0) \}$, $\North = \{ (5,15) \}$,
$\East = \{(0,12) \}$ and $\West = \{(14,5) \}$ in Fig.~\ref{fig-6}.

\medskip
If the filling operations find a contradiction, there is no solution with the considered position of the feet. This branch of the computation is stopped.
If there is no contradiction and it does not remain any undetermined point, then a unique solution has been found with the considered position of the feet. This branch of the computation also ends.
We go to the aggregation step only if it remains undetermined points without contradiction.

\paragraph{The Switching Components.}

We place us now in the case where the filling step did not find any contradiction and where it remains some undetermined points. Then the kernel obtained by using the filling operations of \cite{Nivat} has the properties to be $HV$-convex and $4$-connected.
The connectivity allows us to partition
 the undetermined points in four subsets that we call the \textit{borders}: the North West, North East, South East and South West borders. They are respectively denoted $\NW$, $\NE$, $\SE$ and $\SW$ (Fig.~\ref{fig-feet}).

Then we have a partition of the undetermined points in $\Undetermined= \NW \cup \NE \cap \SE \cap \SW$. We can build another partition in the following way.
The starting remark is that an undetermined point cannot be isolated in a row or a column. Otherwise it would have been determined. The undetermined points go by pair on each row and column.
If for instance $\underline p=(i,j)$ is undetermined and belongs to the South border $\SW \cup \SE$, then the point denoted  $\overline p  =(i, j+v_i)$ is a North undetermined point. These two points are \textit{vertical correspondents}. In the same way, any West undetermined point $|p=(i,j)$ has a horizontal East correspondent
$p|=(i+h_j,j)$ (Fig.~\ref{fig8}).
Horizontal and vertical correspondences are symmetric relations.

\begin{figure}[ht]
  \begin{center}
		\includegraphics[width=0.42\textwidth]{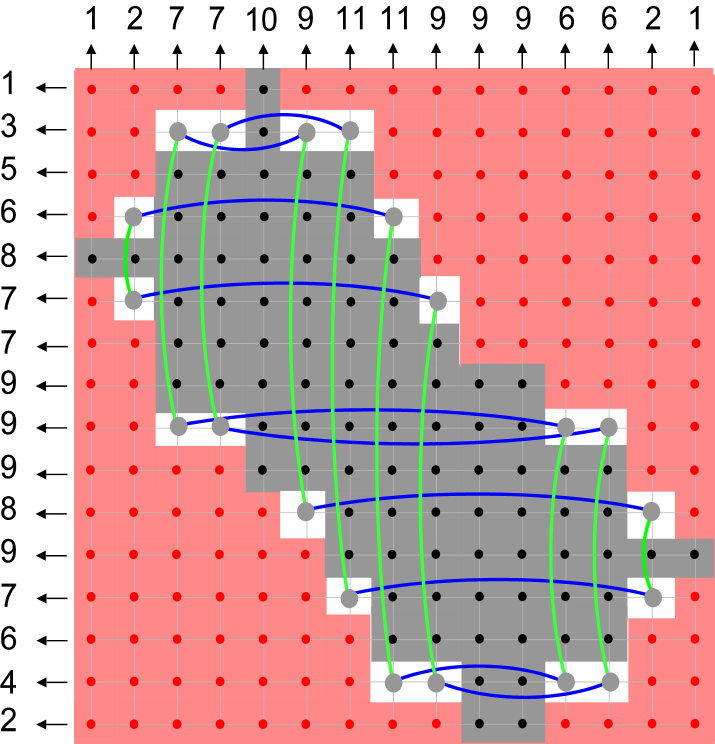}
	\end{center}\vspace*{-3mm}
	\caption{\label{fig8} \textbf{Corresponding points and switching components}. After the filling step, horizontal correspondences are drawn in blue while the vertical correspondences are drawn in green on the bad guy instance. In this case, the graph is composed of a unique cycle i.e switching component.
}
\end{figure}

\begin{figure}[!h]
\vspace*{2mm}
  \begin{center}
		\includegraphics[width=0.98\textwidth]{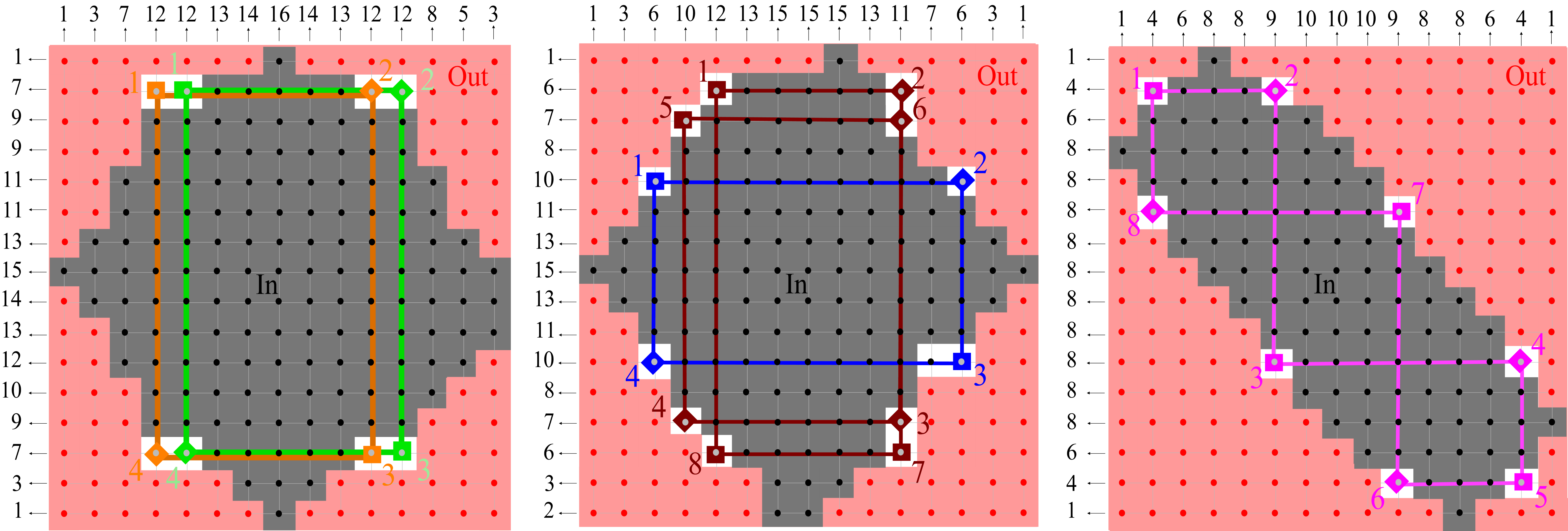}
	\end{center}\vspace*{-3mm}
	\caption{\label{fig9} \textbf{Switching components}. In each switching component, we represent the points alternatively by squares $\square$ and diamonds $\diamond$ to express that their Boolean variables are negation from each other. Either the squares, or the diamonds of a switching component are in a solution.}\vspace*{-3mm}
\end{figure}

\medskip
We present now the main property of corresponding points (Fig.~\ref{fig9}). If no one of two corresponding points is aggregated to the kernel then the number of points on its row or column is too small. Conversely, if both points are aggregated to the kernel, there are too many points on the row or column. It follows that exactly one point per pair of corresponding points has to be aggregated to the kernel.
We associate now to each undetermined point a Boolean variable assigned to $1$ if we decide to aggregate the point to the kernel and to $0$ if it is rejected. These variables are called the \textit{aggregation variables}.
It follows from the previous remark that the values of the aggregation variables of corresponding points are the negation from each other. We represent graphically this relation by denoting the points either with a square $\square$ or with a diamond $\bigDiamond$.

If we consider now the graph of the corresponding points, each point is a vertex of degree $2$. It follows that the graph of correspondences is an union of cycles each one called \textit{switching component}.

\begin{definition}
A switching component $P$ is a cycle of corresponding undetermined points (Fig.~\ref{fig9}).
\end{definition}

\paragraph{Aggregation Step.}

It remains us to determine which undetermined points should to be aggregated the kernel or rejected. In other words, the problem is to find an assignation of the aggregation variables associated to the undetermined points which provides a set which is $4$-connected, HV-convex and has the prescribed numbers of points in each row and column. The key point of the algorithm from \cite{Nivat} is that these properties are easily expressed by a 2-CNF on the aggregation variables. Then finding an HV-convex polyomino with the prescribed horizontal and vertical X-rays is reduced to a 2-SAT instance. Then the aggregation step is resolved   by computing and solving the 2-CNF in polynomial time \cite{Tarjan}.

\paragraph{Old Oral Conjecture.}
25 years after the seminal paper presenting this algorithm \cite{Nivat}, no instance was found where the 2-SAT CNF on the aggregation variables was not feasible. This observation led to the following empirical conjecture: the 2-SAT formulas expressing the HV-convexity of a solution on the aggregation variables always feasible.

To determine whether this conjecture is true or false and also for the reminder of the paper, we now present some structural properties of the switching components.

\subsection{Episode 2:  structures of the switching components}

Before going further, notice that the properties used for the reconstruction procedure of an HV-convex solution
can also be used for the reconstruction of digital convex lattice sets since digital convexity implies HV-convexity.

\medskip
According to \cite{RSC}, the structures of the switching components can be classified according to the positions of the feet. There are six different configurations presented in Fig.~\ref{abcdef}. In the configurations a), b), c), d), e), the positions of the feet provide thin solutions while the configuration f) is the configuration of the fat convex lattice sets.

\begin{figure}[h!]
\vspace*{2mm}
  \begin{center}
		\includegraphics[width=0.96\textwidth]{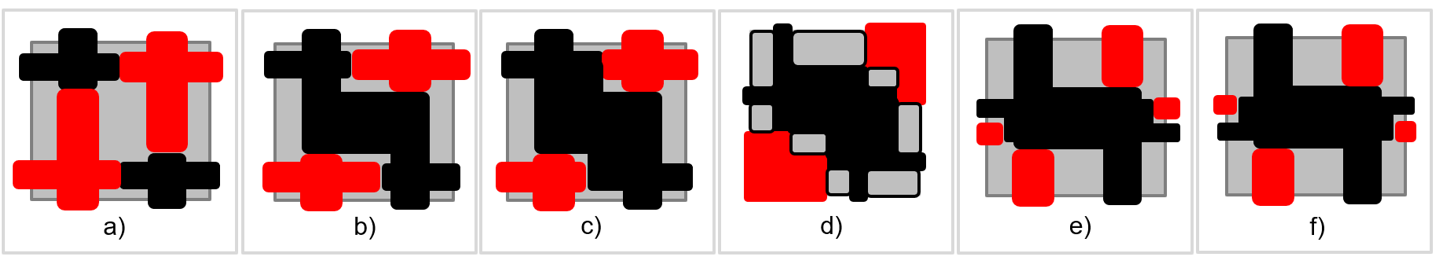}
	\end{center}\vspace*{-4mm}
	\caption{\label{abcdef} \textbf{The  six feasible configurations of the feet} admitting HV-convex solutions \cite{RSC}. We color in black the feet and the points that can be directly determined because they are in the same rows or columns. We color in red some of the points which can be directly excluded and in grey the  regions of the undetermined points.
}
\end{figure}

\begin{figure}[!htbp]
\vspace*{-3mm}
  \begin{center}
		\includegraphics[width=0.74\textwidth]{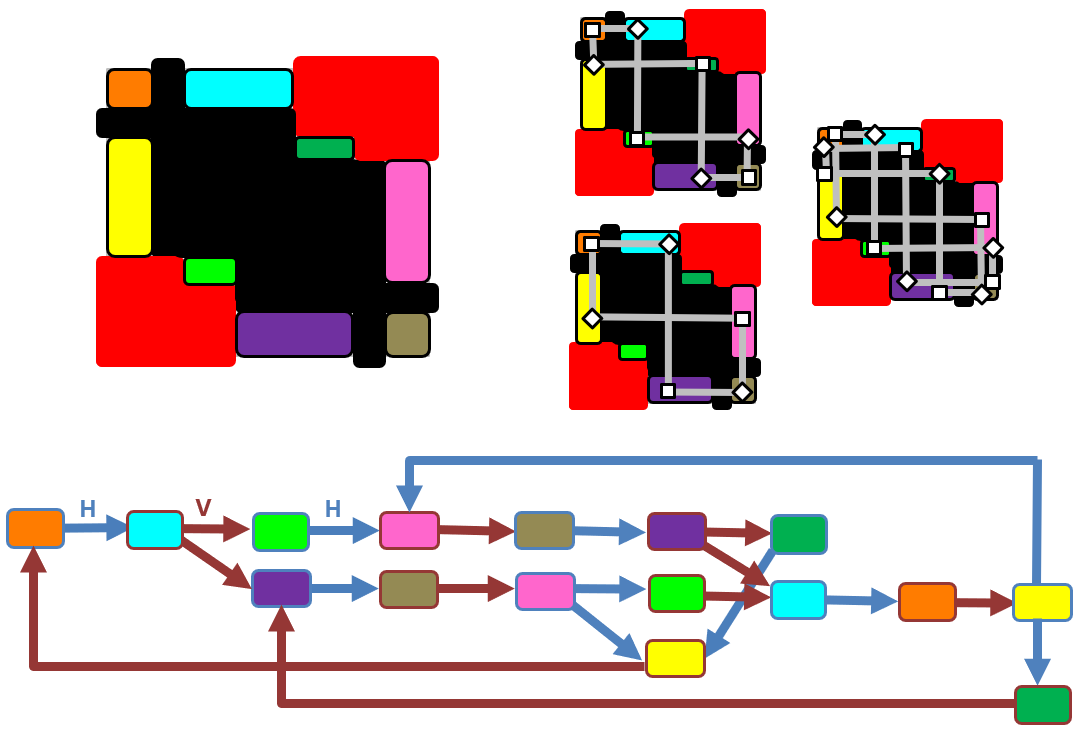}
	\end{center}\vspace*{-4mm}
	\caption{\label{paths} \textbf{In the configuration c) the graph of correspondences  between the undetermined regions}. We consider the graph of vertical and horizontal correspondences of the different regions that might contain undetermined points (with different colors). Each region appears twice in the graph, once after a horizontal correspondence and the other after a vertical correspondence.
	This graph admits six fundamental oriented cycles and three without orientation. With feet in configuration c), the switching components are necessarily concatenations of these three fundamental cycles.
}
\vspace*{3mm}
  \begin{center}
		\includegraphics[width=0.89\textwidth]{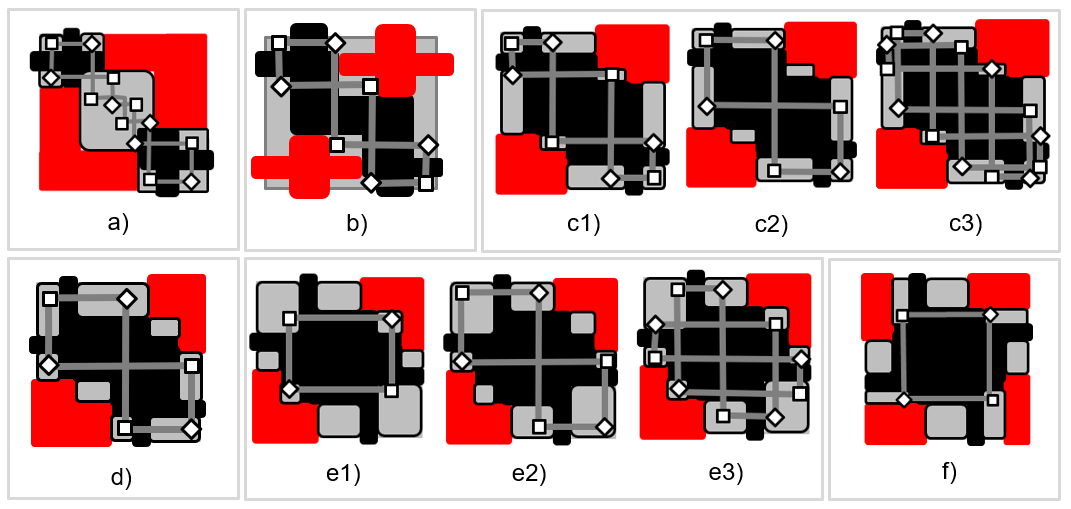}
	\end{center}\vspace*{-4mm}
	\caption{\label{cycles} \textbf{The fundamental cycles of the switching components} in each configuration (except the configuration a) for which the path can have an arbitrary number of edges in the central region). Notice that the cycle c3) has been forgotten in \cite{RSC}. The list is now complete.
}
\end{figure}

For each one of the six configurations, the set of undetermined points is restricted to a small number of regions. Each region corresponds horizontally and vertically to some other regions, so that we can build the graph of the vertical and horizontal correspondences of the regions. The correspondence graph in the configuration c) is for instance represented in Fig.~\ref{paths}. Such graphs admit a small number of fundamental cycles i.e minimal closed paths. There are exactly three minimal cycles in the configuration c). The closed path in the graph are obtained by concatenation of these minimal cycles. The
 fundamental minimal cycles in all configurations are represented in  Fig.~\ref{cycles}.
They provide a better understanding of the very constrained structures of the switching components.

\subsection{The bad guy}

Let us come back to the conjecture that the aggregation 2-SAT formula expressing the HV-convexity of a solution is always feasible. For the configurations b), d), and f), it is true: a solution can be found by aggregating to the kernel either all the squares or all the diamonds. For having a chance to provide an example of a non feasible aggregation formula, we need squares and diamonds in the same region as in the following counter-example.

\medskip
The bad guy instance is in configuration c) and we have a unique switching component which is a concatenation of a fundamental cycle c2) and a fundamental cycle c3) (these two cycles are drawn in Fig.~\ref{cycles}). The filling step is illustrated in Fig.~\ref{fig-6}. The correspondences are illustrated in Fig.~\ref{fig8} and the Boolean aggregation variables are represented in Fig.~\ref{badguy}.
The 2-SAT formula expressing the HV-convexity of a solution admits no solution since a square implies a diamond, which is contradictory. It disproves the empirical conjecture. There exists reconstruction instances for which the aggregation 2-SAT CNF is not feasible.

\begin{figure}[h!]
\vspace*{2mm}
  \begin{center}
		\includegraphics[width=0.74\textwidth]{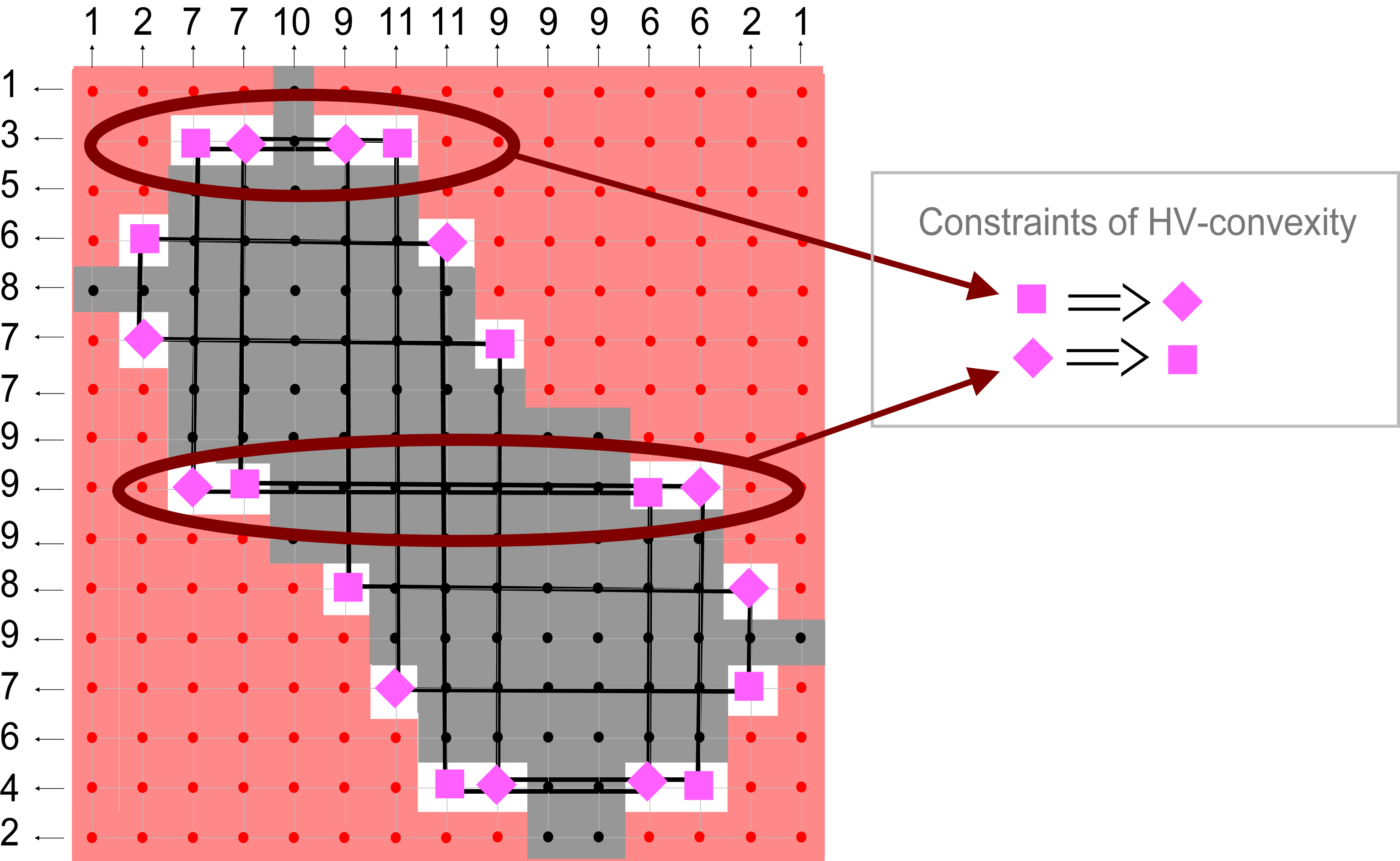}
	\end{center}
	\caption{\label{badguy} \textbf{Counter-example of the conjecture.} The filling operations lead to a unique switching component. It was conjectured that when the filling operations do not lead to an inconsistency, there exists always an assignment of the switching components providing an HV-convex solution. It is false with this example. The HV-convexity leads to inconsistent 2-clauses.
}
\end{figure}

\section{Reconstructing digital convex sets from a unique X-ray in polynomial time}\label{g}

We now consider the problem $DT_{\C} (v)$ of the reconstruction of a digital convex lattice set from a unique vertical X-ray. It is a fundamental question which provides new tools for tackling the problem with horizontal and vertical X-rays.

In this section we provide a polynomial time algorithm called \textit{DAGtomo1} for solving $DT_{\C} (v)$. We start by reducing the problem to a bounded region of interest. Then we provide a reduction of the problem to the research of a path in a DAG.

\subsection{Work in a bounded region}

We assume without loss of generality  that the first and last coordinates of the input vector $v$ are not null. It guarantees that any solution $S$ of $DT_\C (v)$ contains points in the vertical lines $x=0$ and $x=m-1$.

\medskip
The lowest points of a solution $S$ on these two lines are respectively called the left and right bottom points.
Then instead of searching an arbitrary solution of $DT_\C (v)$, we are going to restrict our research to a solution where the left and right bottom points belong to a bounded region.

The first remark is that vertical translations preserve the vertical X-ray and the digital convexity. Then the set of the solutions of $DT_\C (v)$ is invariant under vertical translations. It follows that if the instance $DT_\C (v)$ admits a solution, then there is a solution with the origin as left bottom point.
It allows us to reduce the problem $DT_\C (v)$ to the research of a solution with the origin as left bottom point.

The second remark is that the vertical shearing $shear: x\leftarrow x , y \leftarrow y-x$ preserves the vertical X-ray, the convexity, and the vertical line $x=0$.
Then the set of the solutions  of the instance $DT_\C (v)$ is invariant by the vertical shear transform. It follows that if $DT_\C (v)$ admits a solution $S$, then it admits a solution $S'$  with the same left bottom point than $S$ and such that the $y$ coordinate of the right bottom  point of $S'$ is in the interval $[0, m-1[$ (we recall here that the $x$ coordinate of the right bottom   points is $m-1)$.

With these two remarks, we restrict the research of a solution of $DT_\C (v)$ to a digital convex lattice set with the origin as left bottom point and a right bottom point of coordinates $(m-1,y_{m-1})$ with $0\leq y_{m-1} < m-1$.

It follows that the points of the segment joining the two extreme bottom points are included in the triangle $T_m$ of vertices $(0,0)$, $(m-1,0)$, and $(m-1,m-1)$ (Fig.~\ref{quads}). This triangle $T_m$ contains $\frac{m(m+1)}{2}$ integer points.

\begin{figure}[!h]
  \begin{center}
		\includegraphics[width=0.35\textwidth]{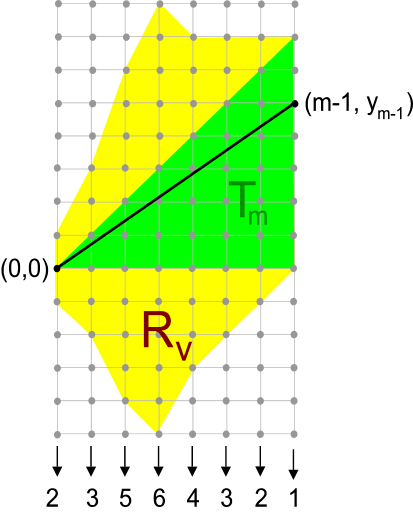}
	\end{center}\vspace*{-4mm}
	\caption{\label{region} \textbf{Region of work.} We fix the left bottom point at the origin. Then with vertical shearing, we can assume without loss of generality that the right bottom point $(m-1,y_{m-1})$ belongs to the vertical segment from $(m-1,0)$ to $(m-1,m-2)$. We notice that the segment from the origin to $(m-1,y_{m-1})$ is by definition included in the convex hull of the solution and in the triangle $T_m$ of vertices $(0,0)$, $(m-1,0)$, $(m-1,m-1)$. By taking into account  the vertical X-ray $v$, it provides a bounded region $R_v$ containing a solution if there exists one.
}
\end{figure}

  \begin{figure}[!h]
  \begin{center}
		\includegraphics[width=0.95\textwidth]{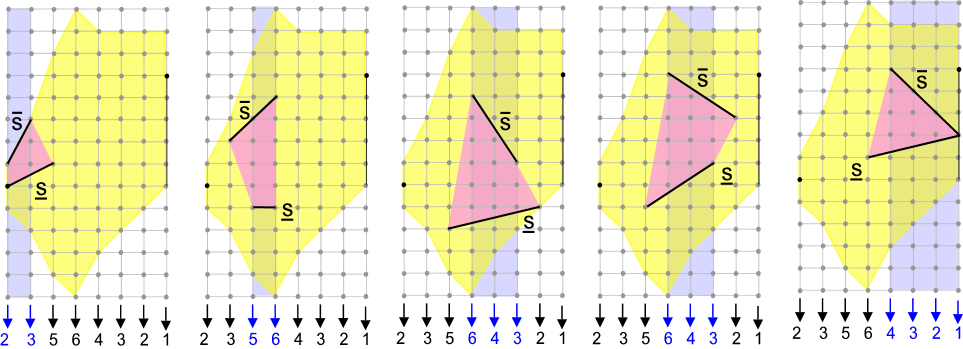}
	\end{center}\vspace*{-4mm}
	\caption{\label{quads} \textbf{Quads.} The nodes of the DAG are quads (colored in pink). The common columns of $\underline s $ and $\overline s$ are highlighted in blue. For each one of these columns, the number of points of the quad is equal to the prescribed value given by the vertical X-ray.
}\vspace*{-1mm}
\end{figure}

\medskip
We bound now the region of the other points of a solution. We consider the column $x=i$.
The coordinate of the vertical X-ray along the line $x=i$ is $v_i$. We notice that if a point $(i,y)$ belongs to a solution of $DT_\C (v)$, then its vertical distance to the triangle $T_m$ is at most $v_i -1$. Otherwise the solution would not be convex. The region whose points $(i,y)$ are at vertical distance at most $v_i$ from $T_m$ is denoted $R_v$.
It contains at most $\frac{m(m+1)}{2} + 2 \sum _{i=0}^{m-1} v_i -2m $ points where $\sum _{i=0}^{m-1} v_i$ is the prescribed number of points of any solution. We have shown that if $DT_\C (v)$ admits a solution, then there exists a solution included in $R_v$. The algorithm \texttt{DAGTomo1} searches for a solution in this region.

\subsection{\texttt{DAGTomo1}}

We present now the algorithm \textit{DAGtomo1} which solves $DT_{\C} (v)$ by searching for a solution in the region $R_v$. The strategy is to reduce the problem to the search of a path in a directed acyclic graph (DAG).

\begin{figure}[h!]
  \begin{center}
		\includegraphics[width=0.95\textwidth]{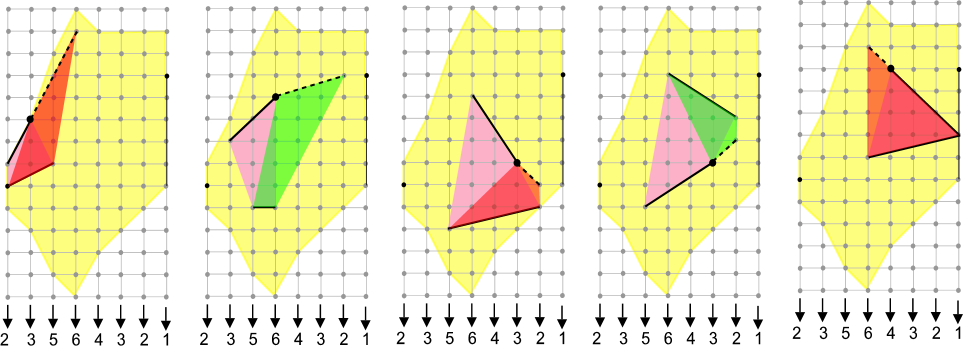}
	\end{center}\vspace*{-3mm}
	\caption{\label{links} \textbf{Oriented edges in the DAG.} There is an edge between a pair of consecutive quads if they share a segment and provide a strictly convex turn on the other side. The right quad is colored in red if it is not a successor of the left quad and in green if they are both linked by an edge in the DAG.
}
\end{figure}

\paragraph{The DAG.} We consider the following Directed Acyclic Graph $\Gamma(v)$ whose nodes are convex quads with the prescribed vertical X-ray and linked by edges if they have a convex junction (Fig.~\ref{quads} and Fig.~\ref{links}). To be more precise,
\begin{itemize}
    \item the nodes of $\Gamma(v)$ are quads made by a pair of non vertical segments $\underline s , \overline s$ in the following configuration.\begin{enumerate}
        \item First, the four vertices of the two segments $\underline s$ and $\overline s$ are in convex position (one vertex can be in common if it is the origin or if its $x$ coordinate is $m-1$) and belong to $R_v$.
        \item We denote $[a\cdots b]$ the interval of the common $x$-coordinates of the two segments $\underline s , \overline s$. We assume that the integral interval $[a..b]$ is not empty. Otherwise, the pair of segments  $\underline s , \overline s$  is excluded from the nodes.  
        \item The last condition is that the segment $\overline s$ is above the segment $\underline s$ with the prescribed X-ray. By denoting $\overline y(i)$ the $y$ coordinate of the point of $\overline s$ of $x$ coordinate $i$ and $\underline y(i)$ the $y$ coordinate of the point of $\underline s$ of $x$ coordinate $i$, we request $\lfloor \overline y(i) \rfloor - \lceil \underline y(i) \rceil = h_i$. In other words,  the quad made by the convex hull of $\underline s , \overline s$  must contain the prescribed number of points for each column with $x$ in $[a..b]$.
    \end{enumerate}

    \item We consider two nodes $\underline s_1 , \overline s_1$ and  $\underline s_2 , \overline s_2$. We denote  $[a_1..b_1]$ (respectively $[a_2..b_2]$) the interval of the common $x$ coordinates for the two pairs of segments $\underline s_1 , \overline s_1$   (respectively $\underline s_2 , \overline s_2$). The DAG $\Gamma(v)$ has an oriented edge from $\underline s_1 , \overline s_1$ to  $\underline s_2 , \overline s_2$ if
    \begin{enumerate}
        \item they have a common segment: either $\underline s_1 = \underline s_2$ or  $\overline s_1 = \overline s_2$,
        \item the two segments $s_1$ and $s_2$ which are not equal are consecutive: the right vertex $x$ of $s_1$ is equal to the left  vertex of $s_2$.
        \item Moreover, the two different segments are in a convex configuration: the vertex $x$ is not in the convex hull of the other vertices.
    \end{enumerate}
\end{itemize}

We introduce now two sets of nodes called $\Left$ and $\Right$ containing respectively the left-most and right-most quads.

The set $\Left$ contains the quads with the origin as vertex. If the first coordinate of the X-ray $v$ is $v_0=1$, then the quads $\underline s , \overline s$ of $\Left$ are only triangles with the origin as left vertex of $\underline s$ and $\overline s$. Otherwise the two segments $\underline s , \overline s$ of the quads in $\Left$ have respectively $(0,0)$ and $(0,v_0)$ as left nodes.

The set $\Right$ contains the quads with at least one vertex of $x$ coordinate $m-1$.  If the last coordinate of the X-ray $v$ is $v_{m-1}=1$, then the quads $\underline s , \overline s$ of $\Right$ are only triangles with the right vertex of coordinates $x=m-1$ and $0\leq y \leq m-1$. Otherwise, the quads $\underline s , \overline s$ of $\Right$ have a vertical edge between two vertices $(m-1,y_r)$ with $0\leq y_r \leq m-1$ and the point of coordinates $(m-1, y_r+v_ {m-1}-1)$.

\paragraph{Reduction.}

We reduce the reconstruction  $DT_{\C} (v)$ of a digital convex set of vertical X-ray $v$ to the research of path going from $\Left$ to $\Right$ in the DAG $\Gamma(v)$ (Fig.~\ref{path}).

\begin{figure}[!h]
\vspace*{2mm}
  \begin{center}
		\includegraphics[width=0.94\textwidth]{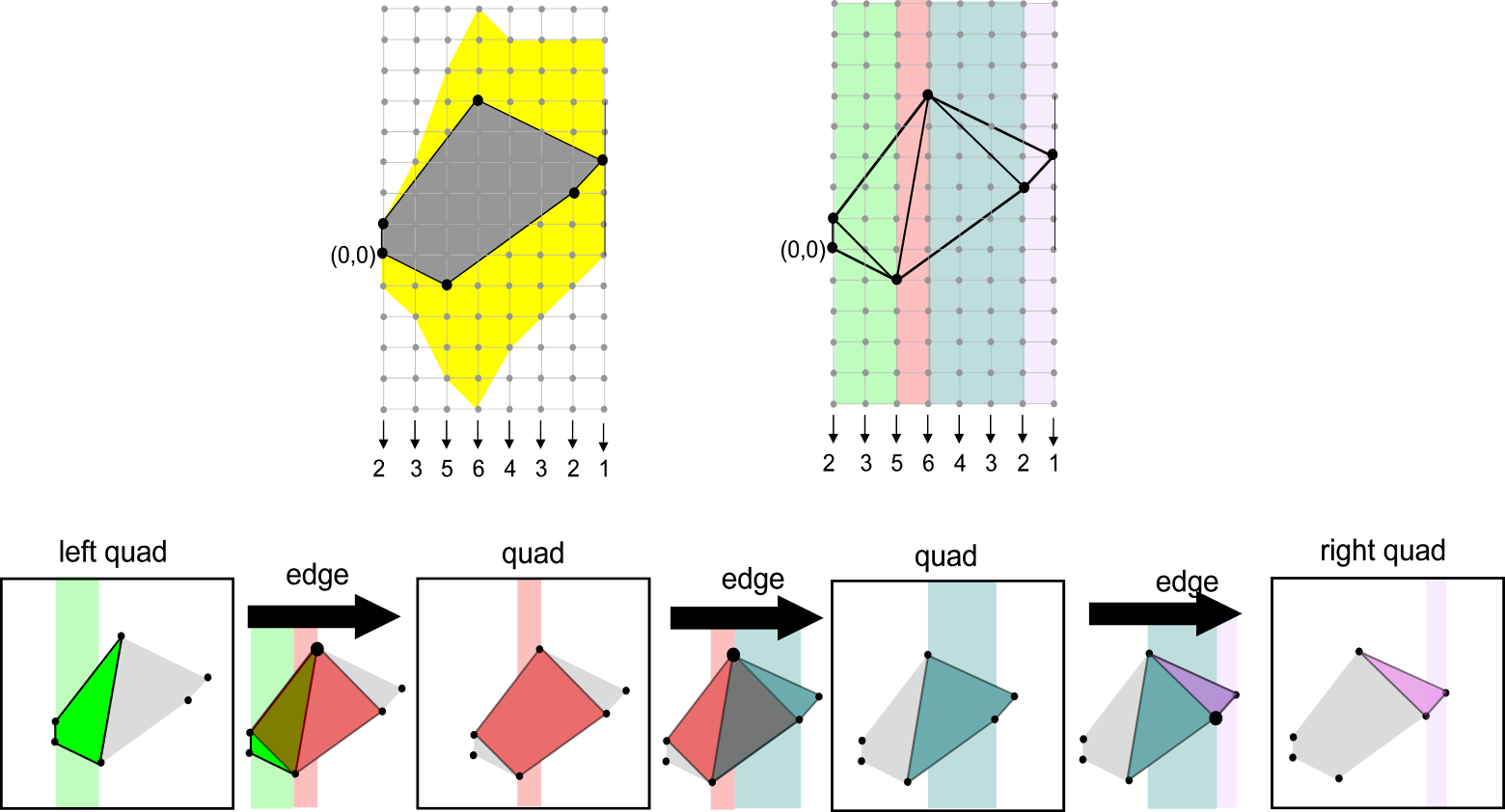}
	\end{center}\vspace*{-4mm}
	\caption{\label{path} \textbf{From a solution to a path in the DAG and conversely.} We reduce the problem of finding a solution of $DT_{\C} (v)$ to the research of a path from $\Left$ to $\Right$ in the DAG $\Gamma(v)$.
}\vspace*{-3mm}
\end{figure}

\begin{proposition}
The problem $DT_{\C} (v)$ admits a solution if and only if there exists a path going from $\Left$ to $\Right$ in the DAG  $\Gamma(v)$.
\end{proposition}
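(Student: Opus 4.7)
The plan is to prove both directions of the equivalence by passing through the decomposition of the boundary of $\mathrm{conv}(S)$ into an upper and a lower convex chain. In each direction the edge conditions of $\Gamma(v)$ correspond exactly to the fact that these chains are globally convex polygonal paths running from column $0$ to column $m-1$.

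\textbf{From a path to a solution.} Given a path $q_1 \to q_2 \to \cdots \to q_k$ with $q_1 \in \Left$ and $q_k \in \Right$, I would concatenate the distinct lower segments of the quads into a chain $\underline C$ and the distinct upper segments into a chain $\overline C$. The edge conditions (common segment plus strictly convex turn on the other side) ensure that $\underline C$ and $\overline C$ are each convex polygonal paths running from $x=0$ to $x=m-1$. Define $S$ to be the set of integer points of the closed region bounded below by $\underline C$ and above by $\overline C$; this region is convex, so $S \in \C$. The per-column condition in the definition of a quad, applied on each common $x$-range along the path, then forces exactly $v_i$ integer points in each column $i\in [0,m-1]$, whence $V(S)=v$.

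\textbf{From a solution to a path.} Conversely, assume $DT_{\C}(v)$ admits a solution. By the reductions of the previous subsection I may fix a solution $S\subset R_v$ whose left bottom point is the origin. Let $\ell_0,\ldots,\ell_q$ be the vertices of the lower convex chain of $\mathrm{conv}(S)$ and $u_0,\ldots,u_p$ the vertices of the upper chain, ordered by increasing $x$-coordinate. I build the path greedily: the initial quad uses $\underline s=\ell_0\ell_1$ and $\overline s=u_0u_1$; at each step I compare the $x$-coordinates of the right endpoints of $\underline s$ and $\overline s$, replace the segment whose right endpoint has the smaller $x$-coordinate by the next segment along its chain, and keep the other segment unchanged. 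This yields a valid edge of $\Gamma(v)$ since the unchanged segment is common to both quads, and the newly introduced vertex is a strict convex turn of its chain (otherwise it would not be a vertex of $\mathrm{conv}(S)$). The first quad lies in $\Left$ because both segments start at $x=0$ with the prescribed $y$-coordinates of the bottom and top points in column $0$, and symmetrically the last quad lies in $\Right$.

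\textbf{Where the work lies.} The main obstacle is checking that every quad produced along the way is a valid node of $\Gamma(v)$: the four vertices are in convex position (inherited from the global convexity of the two chains, with the permitted degeneracy of a shared vertex at $x=0$ or $x=m-1$); all four vertices lie in $R_v$ (inherited from $S\subset R_v$); and the per-column X-ray identity holds on the common $x$-range. The last point is the substantive one: on the common range of a quad, $\overline s$ and $\underline s$ coincide with the upper and lower boundaries of $\mathrm{conv}(S)$, so by digital convexity the integer points of $S$ in column $i$ coincide with $\mathrm{conv}(S)\cap\{x=i\}\cap \Z^2$, which by construction contains the prescribed number $v_i$ of points. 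Degenerate cases where $v_0=1$ or $v_{m-1}=1$ (the leftmost or rightmost quad is a triangle) fit the same framework, with the single shared vertex occurring at $x=0$ or $x=m-1$ as explicitly allowed in the definitions of $\Left$ and $\Right$.
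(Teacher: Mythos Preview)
Your proof is correct and follows essentially the same approach as the paper: in both directions you pass through the decomposition of the boundary of $\mathrm{conv}(S)$ into an upper and a lower convex chain and read off the sequence of quads. The paper's own proof is in fact much terser than yours---it simply states that the union of the quads along a path is digital convex with the prescribed X-ray, and that conversely the convex hull of a solution ``is fully covered by a sequence of consecutive quads''---so your explicit greedy rule (advance the segment whose right endpoint has the smaller $x$-coordinate) and your verification that each intermediate quad is a valid node are welcome elaborations rather than a different method.
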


\begin{proof}
By construction, if there exists a path in the DAG $\Gamma(v)$, we consider the solution $S$ obtained by union of its quads.
The constraints on the quads to be a node guarantee the prescribed X-ray while the constraint on the direction of the turn at each edge guarantee the convexity of $S$. Thus a path from $\Left$ to $\Right$  provides a solution of $DT_{\C} (v)$.

Conversely, if we have a solution $S$ of $DT_{\C} (v)$, its convex hull is fully covered by a sequence of consecutive quads. It provides a path from $\Left$ to $\Right$. It can be noticed that if one of the quads has a vertical edge, the path is not unique since we can advance either on the upper side or on the lower side of the convex hull of $S$.
\end{proof}

\paragraph{Algorithm.}
The algorithm \texttt{DAGTomo1} builds a part of the DAG $\Gamma (v)$ and searches for a path from $\Left$ to $\Right$.

The graph $\Gamma (v)$ contains at most $k^4$ nodes/quads where $k$ is cardinality of the region $R_v$. Each node has at most $2k$ edges since one of the left nodes is replaced by another point of $R_v$.
There is no interest in computing the whole graph $\Gamma (v)$ since we search for a path issued from $\Left$. The algorithm \texttt{DAGTomo1}  computes the nodes reached from $\Left$. The main loop of the algorithm is the following: given a node/quad $q$, we consider the at most $2k$ quadrilaterals
that can be consecutive neighbors of $q$.
We test whether the new quadrilateral has already been reached in a sorted list of all the quadrilateral which have already been considered. This test takes $O(\log (k^4))$ time. If it is the first time that we meet it, we test whether its X-ray  is equal to the prescribed X-ray for the common columns of the two segments. Computing the X-ray of a quad takes at most $O(m)$ time.  If the X-ray is valid, we test
the convexity of the turn in constant time and store $q$ as antecedent. At last, we test whether this quadrilateral belongs to $\Right$ in constant time.

The algorithm \texttt{DAGTomo1} ends either when a path reaching $\Right$ is found or when all the nodes reached by starting from $\Left$ have been explored.
Its worst-case time complexity is  $O(k^4 (m + \log (k)))$. In order to provide a simple bound, we use  $m + \log (k)\leq mk$. It provides a worst-case time complexity in $O(mk^5)$. We have already bounded $k$ by $\frac{m(m+1)}{2} + 2 \sum _{i=0}^{m-1} v_i -2m $. With the inequality $(a+b)^n \leq (2a)^n + (2b)^n = 2^n (a^n + b^n)$ for positive $a$ and $b$, the property that $k$ is $O(m^2 + \sum _{i=0}^{m-1} v_i)$ implies that $k^5$ is $O(m^{10}+(\sum _{i=0}^{m-1} v_i)^5)$. It shows that the complexity of \texttt{DAGTomo1} is  $O(m^{11}+m(\sum _{i=0}^{m-1} v_i)^5)$ and proves Theorem \ref{t0}.

\section{Reconstructing fat digital convex sets in polynomial time}

In this section, we present the algorithm \texttt{DAGTomo2} for reconstructing fat digital convex sets from their horizontal and vertical X-rays. The property that this algorithm runs in polynomial time  proves Theorem \ref{t1}.

The algorithm \texttt{DAGTomo2} uses a two-step reconstruction procedure as presented for reconstructing HV-convex polyominoes with first the filling step and secondly the aggregation step.
We update the filling step for taking into account the digital convexity which replaces the connectivity and the HV-convexity. The updated filling operations are presented in Section \ref{fil}. The aggregation step used for HV-convex polyominoes is completely modified by using a DAG strategy as in Section~\ref{g}. This main new step is presented in section \ref{gla}.

\subsection{Initialization}

As the classical algorithm reconstructing HV-convex polyominoes and described in Section 2, the algorithm  \texttt{DAGTomo2} works with a partition of the grid in three sets of points $[0\cdots m-1]\times [0\cdots n-1]=\In \cup \Out \cup \Undetermined$.
The kernel is initialized by choosing a position of the feet: $\In \leftarrow \South \cup  \West \cup \North \cup \East$. Notice that we do not only compute the kernel $\In$ but we will also maintain at each step its convex hull.
For each position of the feet, we proceed to the \texttt{DAGTomo2}  reconstruction procedure. This reconstruction procedure has two steps, first the filling step and secondly the aggregation step.

\subsection{Filling step}\label{fil}

We precise now the filling operations of the reconstruction procedure. Some of the filling operations are different from the ones used for reconstructing HV-convex polyominoes because we no more assume the $4$-connectivity of the solution.

\begin{figure}[!b]
\vspace*{-1mm}
  \begin{center}
		\includegraphics[width=0.85\textwidth]{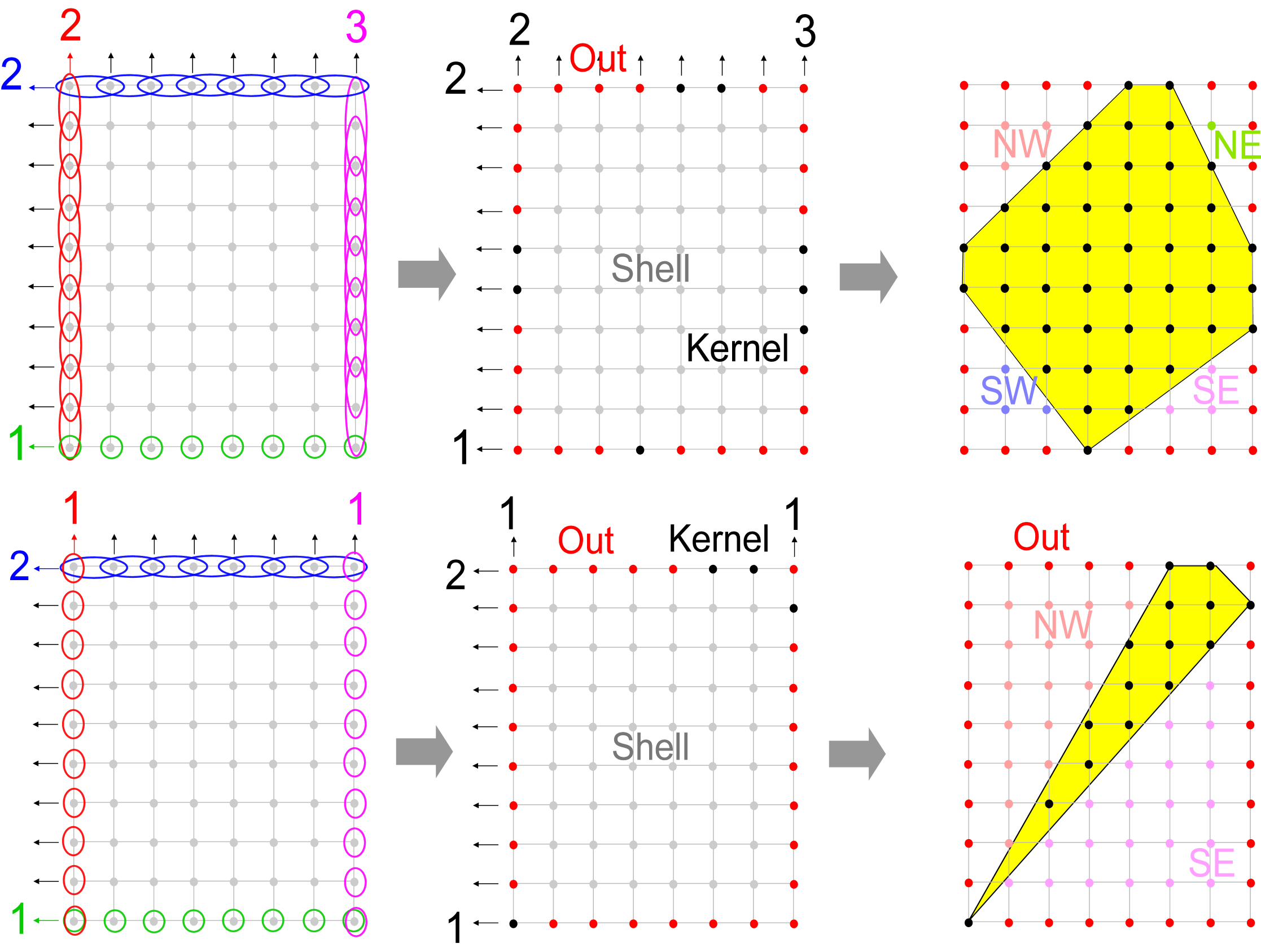}
	\end{center}\vspace*{-4mm}
	\caption{\label{fig-3} \textbf{Initialization of $\In$,  $\Out$, $\NW$, $\NE$, $\SE$ and $\SW$ for two different instances}. On the left, the different possible positions of the feet. A position of the feet being chosen (in black), the other points with $x=0$ or $x=m-1$ or $y=0$ or $y=n-1$ are added to $\Out$ (in red). The undetermined points are colored in grey. On the right, we proceed to the filling operation $\In \leftarrow \conv (\In)$. It provides directly a partition of the undetermined points  in four subsets $\NW$, $\NE$, $\SE$ and $\SW$.
}
\end{figure}

\paragraph{New Convex Filling Operations.}

As we search for a digital convex solution, the first filling operation is
$\In \leftarrow \conv (\In)$ where $\conv$ denotes the operator providing the set of the integer points in the continuous convex hull of the set. By definition, a set  $S$ is digital convex if and only if we have $\conv(S)=S$.
As the convex hull of the kernel contains the feet, it provides directly a partition of the undetermined points in four borders
$\NW$, $\NE$, $\SE$ and $\SW$ (see Fig.~\ref{fig-3}). While this partition requires to use the connectivity and complex filling operations for reconstructing HV-convex polyominoes, it is direct for digital convex sets.
The initial computation of the four sets $\NW$, $\NE$, $\SE$ and $\SW$ takes $O(mn)$ operations.

Each time that a new point is added to the kernel, we update the kernel with \\ 
 $\In \leftarrow \conv (\In)$ to maintain a convex solution. The number of nodes is bounded by $2(m+n)$. Computing the new convex hull takes at most $O(\log (m+n))$ (it is an insertion in the list of the nodes) and adding the lattice points in $\conv (\In)$ takes again $O(\log (m+n))$ per point namely a total time of $O(mn \log (m+n))$ (more efficient data structures can be found in  \cite{Brodal}). By repeating this operation at most $mn$ time, the total time taken by this filling operation is bounded by $O(m^2n^2 \log (m+n))$.

\medskip
Each time that a new point $x$ is added to the set $\Out$ of the excluded points, we test whether there exist undetermined points $y$ such that  $x$ is in the convex hull of the union $\{y\} \cap \Out$. If $x$ is in $\{y\} \cap \Out$, then $y$ is excluded and added to $\Out$.
For each point $x$ and each suspected point $y\in \Undetermined$, the computation of the convex hull of $\{y\} \cap \Out$ takes at most $O(\log (m+n))$ time. Testing whether $x$ is in this convex hull takes a constant time by just testing whether $x$ is in the triangle of vertices $y$ and the two adjacent vertices in the convex hull. Repeating this process for at most $mn$ points $x$ takes  $O(m^2n^2 \log (m+n))$ time.

\paragraph{Regular Filling Operations.}

We describe the classical filling operations on a row but similar filling operations are executed on the columns. The row $y=j$ is decomposed into five horizontal segments with their $x$-coordinates in the intervals
$[0 \cdots a]$, $]a \cdots b[$, $[b \cdots c]$, $]c \cdots d[$ and $[d \cdots m-1]$ (some of them can be empty). The two segments with points of $x$-coordinates in $[0 \cdots a]$ and $[d \cdots m-1]$ contain the left and right excluded points. The central segment $[b \cdots c]$ contains the kernel points while  the two intermediate intervals $]a \cdots b[$ and $]c \cdots d[$ are the $x$ coordinates of the undetermined points.

Then the regular filling operations are done by updating the values of $a$, $b$, $c$ and $d$ in the following way :
\begin{itemize}
\itemsep=0.96pt
    \item if $a<c-h_j-1$ then $a\leftarrow c-h_j-1$
    \item if $b>d-h_j$ then $b\leftarrow d-h_j$,
    \item if $c<a+h_j$ then $c\leftarrow a+h_j$,
    \item if $d>b+h_j$ then $d\leftarrow b+h_j$
\end{itemize}

These four filling operations reduce the set of the undetermined points.

\paragraph{End of the Filling Step.}

The filling step ends in three cases:
\begin{enumerate}
\itemsep=0.98pt
    \item A contradiction is found by adding an excluded point to the kernel or conversely by adding a kernel point to the excluded points. Then the reconstruction procedure fails. It shows that the considered position of the feet does not provide any solution.
    \item The set of the undetermined points becomes empty. Then for the considered position of the feet, a unique solution has been found and the reconstruction procedure ends with a unique solution.
    \item No new undetermined point can be determined. It is the unique case where the reconstruction procedure requires to proceed to the aggregation step.
\end{enumerate}

\subsection{Aggregation step}\label{gla}

We proceed to the aggregation step in the case where it remains undetermined points after the filling step. The aggregation step for the HV-convex polyominoes uses the switching components. The aggregation step of \texttt{DAGTomo2} does not use it. It uses the same strategy as the aggregation step of \texttt{DAGTomo1} by reducing the computation of a solution to the research of path in Directed Acyclic Graph.\vspace*{-1mm}

\paragraph{The Four Undetermined Borders.}
The goal of the convex aggregation is to finalize the computation of a  solution of $DT_{\C \cap \F} (h,v)$. This solution can be described by the convex polygonal lines joining the $\West$ and $\East$ feet to the $\South$ and $\North$ feet. The vertices of the polygonal line going from the $\West$ to the $\North$ feet might either be some vertices of the convex hull of the kernel or some undetermined points in the $\NW$ border. This set of potential vertices is denoted $\NW'$ with $\NW '= \NW \cap \NWV$ where $\NWV$ is the set of the north west vertices of the convex hull of the kernel i.e the vertices between the North and the West feet. The set $\NWV$ and thus also $\NW'$ contain the upper point of the $\West$ foot and the leftmost point of the $\North$ foot.

\begin{figure}[!h]
\vspace*{2mm}
  \begin{center}
		\includegraphics[width=0.45\textwidth]{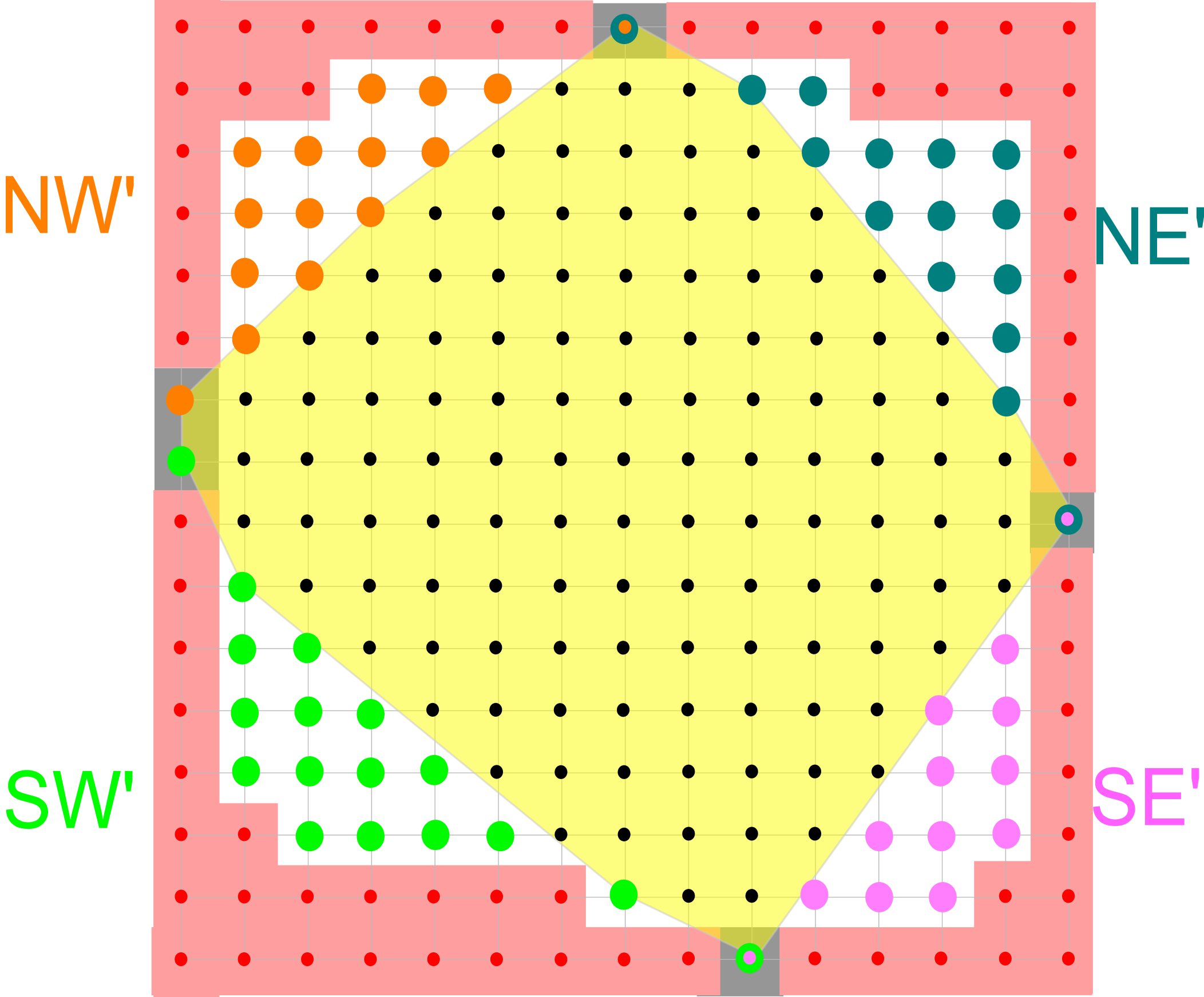}
	\end{center}\vspace*{-2mm}
	\caption{\label{4borders} \textbf{The four borders}. After the filling step, the convex hull of the kernel is colored in yellow. The set of the potential vertices $\NE '$, $\NW '$, $\SE '$ and $\SW '$  contain the undetermined points of each border and the vertices of the convex hull of the kernel (the example shown above is not realistic since the filling operations would complete the kernel but this non-realistic example provides a better understanding of these sets).
	The points of these four sets are represented by  blue, orange, pink and green dots.
	Notice that two sets of potential vertices share a point when a feet contains a unique point.
}
\end{figure}

\medskip
We define in the same manner the set of the potential vertices for the North East, South West and South East borders as $\NE '= \NE \cap \NEV$, $\SW '= \SW \cap \SWV$, and $\SE '= \SE \cap \SEV$ where $\NEV$, $\SWV$ and $\SEV$ are the vertices of the convex hull of the kernel on each border. Each set of potential vertices contains the extreme points of the feet (Fig.~\ref{4borders}).

\paragraph{The Central Fully Determined Strips.}

We consider here the reconstruction of fat digital convex sets. It requires that the feet are in relative positions leading to a fat solution. In this case, the four undetermined borders are in a special configuration. According to the analysis of the configurations of the switching components presented in figure \ref{abcdef}\cite{RSC}, the borders containing the undetermined points are restricted to only four regions corresponding vertically and horizontally as drawn in Fig.~\ref{critics}.
We denote $\Hstrip$ the horizontal central strip containing the $\West$ and $\East$ feet and whose points are fully determined. The points in this strip belong either to the kernel or have been excluded but no one is undetermined.
In the same way, we denote $\Vstrip$ the central vertical strip containing the $\South$ and $\North$ feet and whose points are fully determined.
These two strips define a cross and the four undetermined borders are around it. 

\begin{figure}[ht]
\vspace*{2mm}
  \begin{center}
		\includegraphics[width=0.75\textwidth]{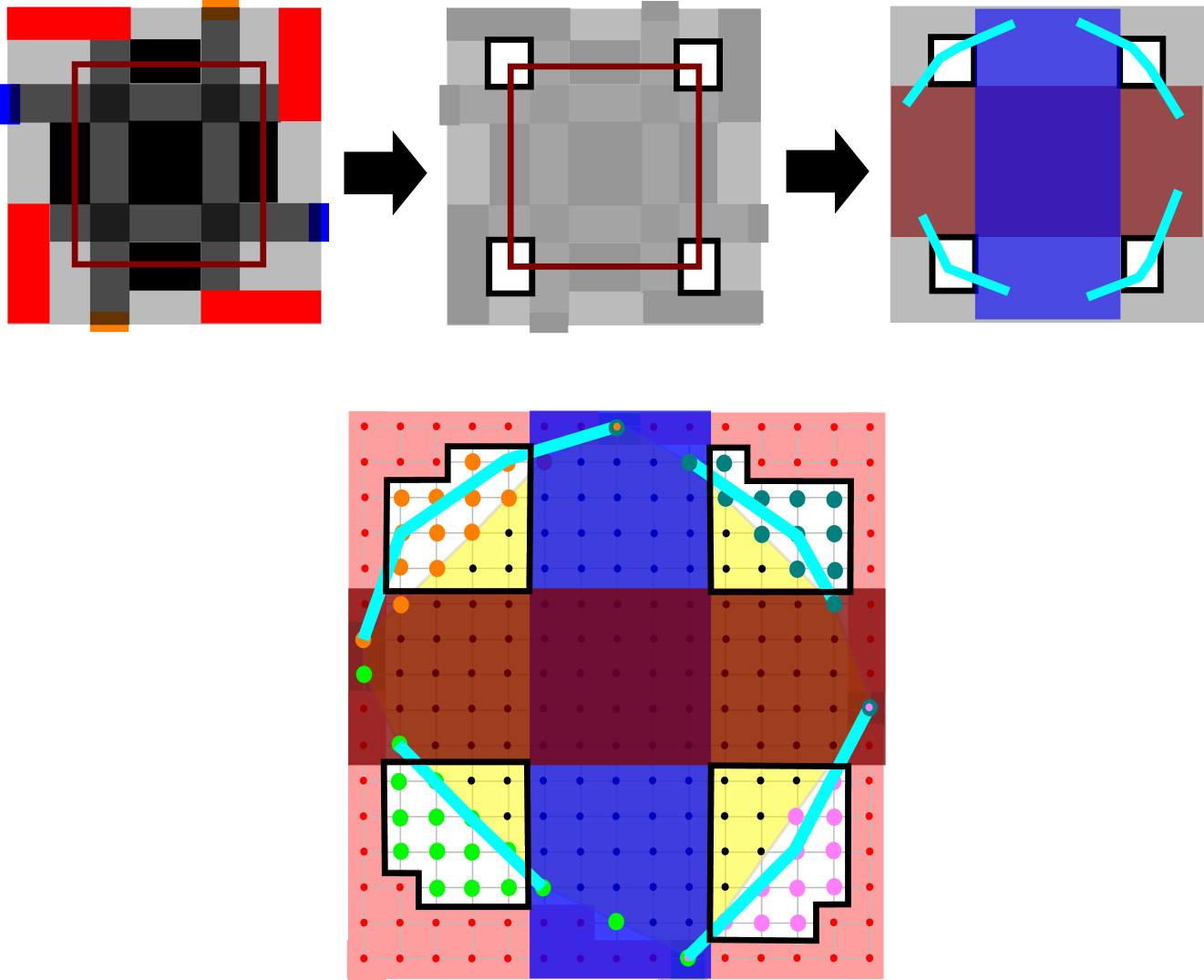}
	\end{center}\vspace*{-3mm}
	\caption{\label{critics} \textbf{The region to fix}. In the case of feet in a fat position, the four undetermined borders belong to four regions on different sides of the horizontal strip $\Hstrip$ represented in brown and the vertical strip $\Vstrip$ represented in blue. The problem of reconstruction can be reduced to the computation of the polygonal lines cutting the four undetermined regions represented in white.
}
\end{figure}

  \begin{figure}[!b]
  \begin{center}
		\includegraphics[width=0.84\textwidth]{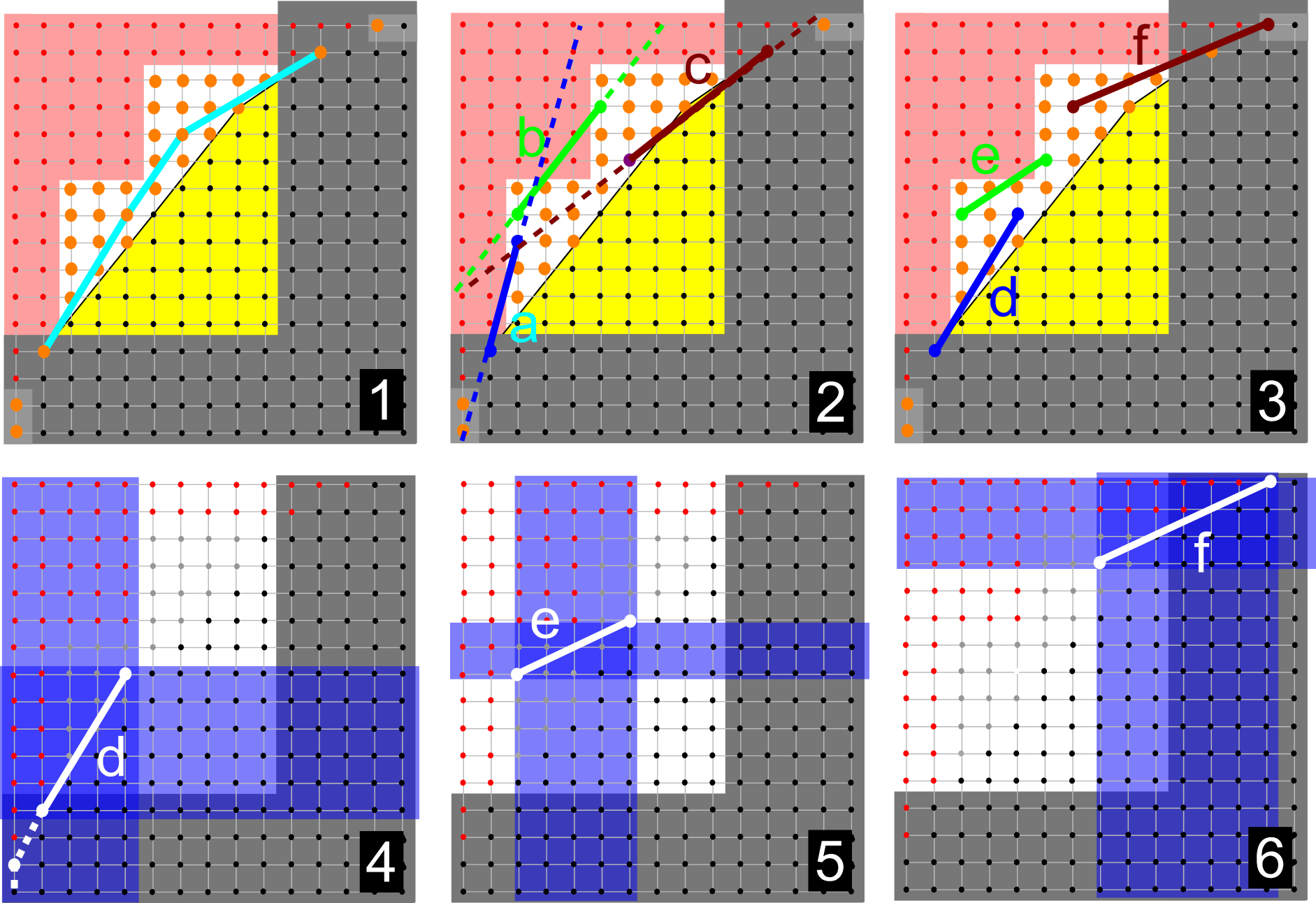}
	\end{center}\vspace*{-2mm}
	\caption{\label{validsegments} \textbf{Some segments and their associated rows and columns}.  In image 1, we see the $\NW$ border with potential vertices in orange and the horizontal and vertical  fully determined strips $\Hstrip$ and $\Vstrip$ colored in grey. In image 2, the segments a, b, and c are not valid because the support lines of a and c separate the kernel while b includes an excluded point. In image 3, we have three valid segments $s_ {\NW}$. In the images 4, 5, 6 the rows of $rows(d)$, $rows(e)$ and $rows(f)$ and the columns of $columns(d)$, $columns(e)$ and $columns(f)$ are colored in blue. Notice that since $d$ has a vertex in the horizontal strip $\Hstrip$, its set of columns is extended until the $\West$ foot.
}
\end{figure}

\paragraph{The DAG.} We introduce the Directed Acyclic Graph $\Gamma(h,v)$ built from the $h$ and $v$ X-rays and from the four sets of potential vertices $\NW '$, $\NE '$, $\SW '$, and $\SE '$ as follows.
\begin{itemize}
    \item The nodes of $\Gamma(h,v)$ are octagons  defined with four (non degenerated) segments $s_{\NW}$, $s_{\NE}$, $s_{\SW}$, $s_{\SE}$ satisfying some constraints  (Fig.~\ref{octogon}). The role of each segment $s_{\NW}, s_{\NE}, s_{\SW}, s_{\SE}$ is to become an edge of the final solution.
    The octagons $q=s_{\NW}, s_{\NE}, s_{\SW}, s_{\SE}$ that we describe here with  their segments and constraints are the central piece of the algorithm \texttt{DAGTomo2}.
    \begin{enumerate}
        \item The vertices of the segments $s_{\NW}, s_{\NE}, s_{\SW}, s_{\SE}$ belong respectively to the sets of potential vertices $\NW '$, $\NE '$, $\SW '$, $\SW '$ that we introduced in the previous paragraphs.
        \item The support line of each segment $s$ must be on one side of the kernel and not separate it into two parts. Otherwise, the segment $s$ can not be an edge of a convex set containing the kernel.
        \item The octagon (here the convex hull of the four segments) does not contain any excluded point.
  \end{enumerate}

   \begin{figure}[!ht]
    \begin{center}
		\includegraphics[width=0.7\textwidth]{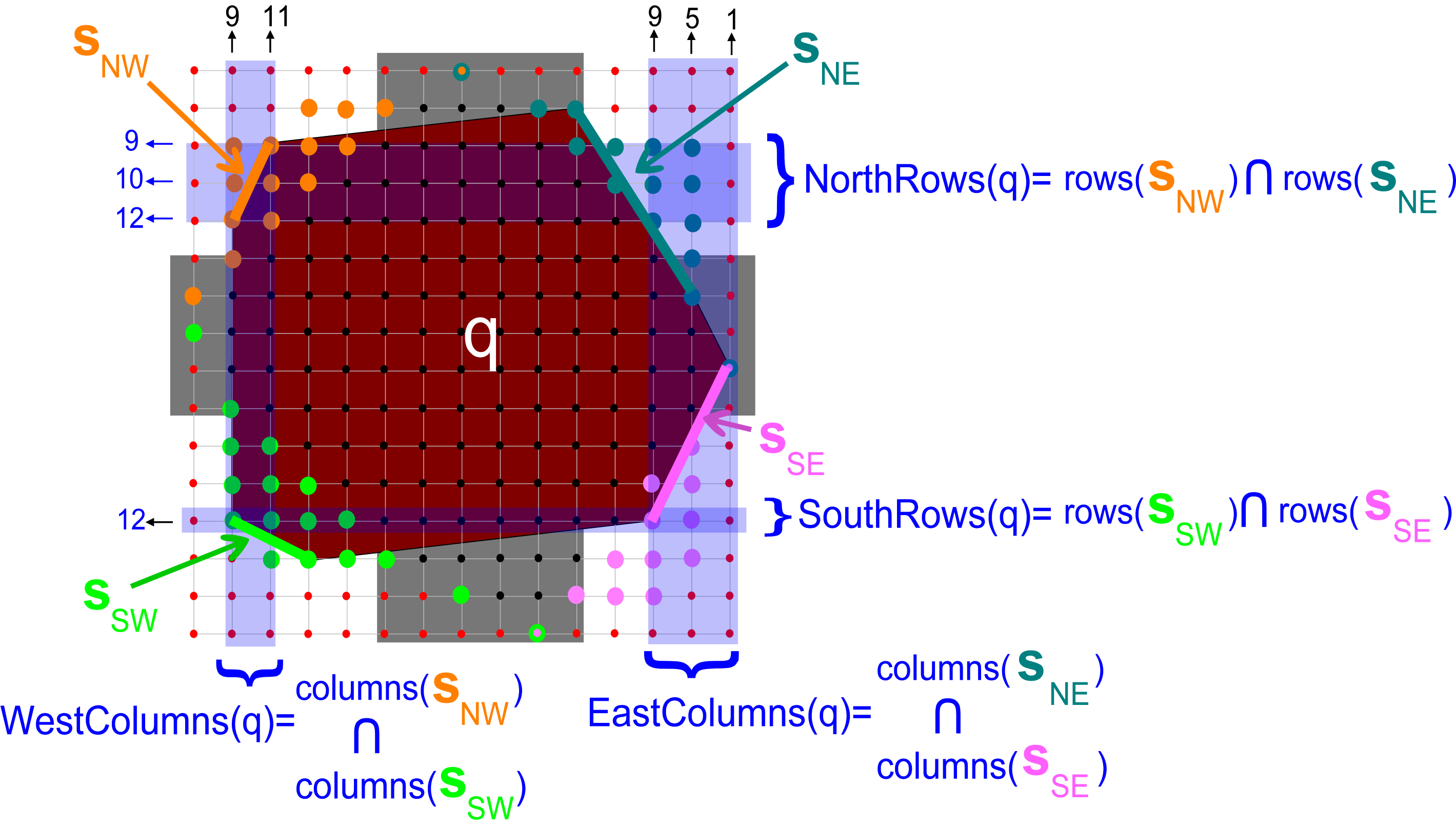}
	\end{center}\vspace*{-5mm}
	\caption{\label{octogon} \textbf{An octagon/node of $\Gamma(h,v)$}. The octagon $q$ is defined by its four non degenerated segments $s_{\NW}, s_{\NE}, s_{\SW}, s_{\SE}$  whose vertices are chosen respectively in $\NW '$, $\NE '$, $\SW '$ and $\SE '$ (the orange/blue/green/pink dots).
	The main conditions for the octagon $q$ to be a node are that the rows
	$NorthRows(q)$,  $SouthRows(q)$ and the columns $WestColumn(q)$, $RightColumn(q)$ are non empty and that for each one of the rows and columns in these sets, the number of integer points between the segments fits with the prescribed coordinates of the X-rays.
	On this example, it can be noticed that columns $columns(s_ {\NE})$ of the $\NE$ edge is not restricted to the columns crossed by the segment. They go until the $\West$ foot because both right vertices of $s_ {\NE}$ and $s_{\SE}$ are in the horizontal strip $\Hstrip$ (colored in grey).
}
\end{figure}

 \begin{itemize}
        \item[{4.}]  The next condition guarantees that the X-ray of the solution fits with the prescribed X-rays $h$ and $v$.
        We first associate to each segment $s_{\NW}, s_{\NE}, s_{\SW}, s_{\SE}$ the set of consecutive rows and columns that it involves. For a segment $s$, these two sets are denoted $rows(s)$ and $columns(s)$.
        The sets $rows(s)$ and $columns(s)$ first contain the consecutive integer rows and columns crossed by $s$ (Fig.~\ref{validsegments}). If the segment $s_{\NW}=xy$ has one vertex $x$ in the horizontal strip $\Hstrip$, then we can extend the known polygonal line until the $\West$ foot since this left upper hull of the solution is already known until the $\West$ foot. It leads us to complete $columns(s_{\NW})$ with its left columns. We proceed in the same way with the vertical strip. If the segment $s_{\NW}=xy$ has one vertex $y$ in the vertical strip $\Vstrip$, then we complete $rows(s_{\NW})$ with the rows above it. By symmetry and rotations, we extend in the same manner the six other sets $rows(s_{\NE})$, $columns(s_{\NE})$, $rows(s_{\SE})$, $columns(s_{\SE})$, $rows(s_{\SW})$, $columns(s_{\SW})$.

         Once the rows and columns associated to each segment are defined, we can go to the octagon. Each octagon is determined by its four segments $s_{\NW}, s_{\NE}, s_{\SW}, s_{\SE}$ whose role is to become four edges of the convex hull of a solution. Then the octagon $q$ fully characterizes the points of the solution in the rows in
        $NorthRows(q)=rows(s_{\NW}) \cap rows(s_{\NE})$ and $SouthRows(q)=rows(s_{\SW}) \cap rows(s_{\SE})$ and in the columns in $WestColumn(q)=columns(s_{\NW}) \cap columns(s_{\SW})$, and $RightColumn(q)=columns(s_{\NE}) \cap$\linebreak  $columns(s_{\SE})$ (these rows and columns are colored in blue in Fig.~\ref{octogon}).
        Now we give a condition on the octagon $q$ to be a node: the four sets of rows
        $NorthRows(q)$, \linebreak $SouthRows(q)$ and columns $WestColumn(q)$, $RightColumn(q)$ must be non empty and for each of these rows and columns, the number of integer points between each pair of segments must be equal to the prescribed coordinate of the input X-ray.
    \end{itemize}

    \item We consider two nodes. There is an edge from the node $q$ to the node $q'$ if they have three common segments and the two segments which differ define a convex turn. To be more precise, they must satisfy the following  conditions:
    \begin{itemize}
        \item The two nodes $q$ and $q'$ have three segments in common. Let us assume for instance that we have the two nodes $q=s_{\NW}, s_{\NE}, s_{\SW}, s_{\SE}$ and $q'=s'_{\NW}, s_{\NE}, s_{\SW}, s_{\SE}$. The three segments $ s_{\NE}, s_{\SW}, s_{\SE}$ are in common and the two segments $s_{\NW}$ and $s_{\NW}'$ are different.
        \item The two different segments $s_{\NW}$ and $s_{\NW}'$ are consecutive (the right vertex of $s_{\NW}$ is the left vertex of $s_{\NW}'$).
        \item The turn at the common vertex is convex (in the sense that the union of the convex hulls of the two octagons $q= s_{\NW}, s_{\NE}, s_{\SW}, s_{\SE}$ and $q'=s_{\NW}, s_{\NE}, s_{\SW}, s'_{\SE}$ is convex).
    \end{itemize}
\end{itemize}

\paragraph{Start and End nodes.}

We introduce now two special sets of octagons $\Start$ and $\End$ (Fig.~\ref{startend}).

\begin{figure}[!h]
\vspace*{4mm}
  \begin{center}
		\includegraphics[width=0.75\textwidth]{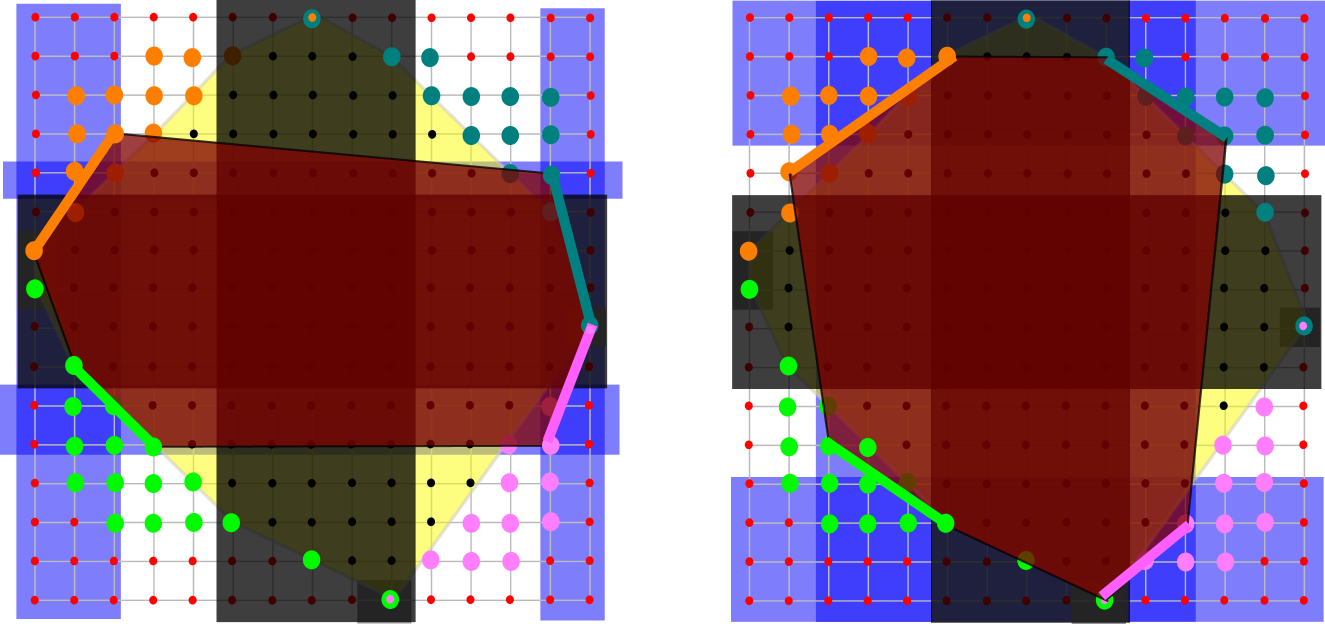}
	\end{center}\vspace*{-3mm}
	\caption{\label{startend} \textbf{Start and End octagons}. On the left, an octagon is in $\Start$ if the four segments $s_{\NW}$, $s_{\NE}$, $s_{\SW}$ and $s_{\SE}$ have a vertex in the horizontal strip (colored in dark grey) and another vertex outside. On the right, an octagon is in $\End$  if the four segments $s_{\NW}$, $s_{\NE}$, $s_{\SW}$ and $s_{\SE}$ have a vertex in the vertical strip (colored in dark grey) and another vertex outside. Notice that the octagon in $\Start$ is degenerated because the south segments have a common vertex. This degenerated case might arise when a foot contains a unique point.
}\vspace*{-3mm}
\end{figure}

\begin{itemize}
    \item $\Start$ is the set of nodes/octagons with the property that the four segments $s_{\NW}$, $s_{\NE}$, $s_{\SW}$ and $s_{\SE}$ have one vertex in the horizontal strip $\Hstrip$ and the other vertex outside.
\item $\End$ is the set of nodes/octagons such  that the four segments $s_{\NW}$, $s_{\NE}$, $s_{\SW}$ and $s_{\SE}$ have one vertex in the vertical strip $\Vstrip$ and the other vertex outside.
\end{itemize}

Notice the special configurations of the rows $NorthRows(q)$,  $SouthRows(q)$ and columns \\ $WestColumn(q)$, $RightColumn(q)$  for octagons in $\Start$ and $\End$. For an octagon in $\Start$, the determined rows and columns do a H while for the an octagon, they do an I (with large bars below and above the vertical segment).
If we now consider a continuous transformation of an octagon $q$ going from $\Start$ to $\End$, its  rows $NorthRows(q)$,  $SouthRows(q)$ and columns $WestColumn(q)$, $RightColumn(q)$ sweep the whole intermediate rows and columns. It is the strategy followed by the algorithm \texttt{DAGTomo2} in order to guarantee that the X-rays of a solution correspond to the prescribed X-rays.

\paragraph{Reduction.}

We reduce the problem of convex aggregation for solving $DT_{\C \cap \F} (h,v)$ to the research of a path going from $\Start$ to $\End$ in the graph
$\Gamma(h,v)$.

\begin{proposition}\label{redu}
The problem $DT_{\C \cap \F} (h,v)$ with some considered positions of the feet $\South$, $\West$, $\North$ and $\East$ admits a solution if and only if after the filling step, there exists a path from $\Start$ to $\End$ in the graph $\Gamma(h,v)$.
\end{proposition}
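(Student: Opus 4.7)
The plan is to mirror the proof of the analogous reduction for \texttt{DAGTomo1} in Section~\ref{g}, viewing an octagon $q=(s_{\NW},s_{\NE},s_{\SW},s_{\SE})$ as a simultaneous snapshot of the four currently-tracked edges on the four borders of a candidate solution's convex hull, and proving the two implications by explicitly converting between a solution and a path.

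For the forward direction, I would start from a solution $S\in\C\cap\F$ realising the prescribed feet and X-rays and decompose the boundary of $\convv(S)$ into four convex polygonal chains $C_{\NW},C_{\NE},C_{\SW},C_{\SE}$ delimited by consecutive feet. Since $S$ is fat with the prescribed feet positions, the configuration analysis of Fig.~\ref{abcdef} and Fig.~\ref{critics} guarantees that each chain has one endpoint in $\Hstrip$ and the other in $\Vstrip$. Listing the edges of each chain in order, $C_{\alpha}=(e_{\alpha,1},\dots,e_{\alpha,k_{\alpha}})$, I would construct a path by starting at $q_{0}=(e_{\NW,1},e_{\NE,1},e_{\SW,1},e_{\SE,1})\in\Start$ and advancing one chain by exactly one edge at each step, ending at $q_{K}=(e_{\NW,k_{\NW}},e_{\NE,k_{\NE}},e_{\SW,k_{\SW}},e_{\SE,k_{\SE}})\in\End$. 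Each intermediate octagon is then a valid node: its segments support $\convv(S)$ and so do not separate the kernel; the octagon lies inside $\convv(S)$ and so contains no excluded point; and the prescribed counts on $NorthRows(q_{i})$, $SouthRows(q_{i})$, $WestColumn(q_{i})$, $RightColumn(q_{i})$ are met because $H(S)=h$ and $V(S)=v$. The convex-turn condition at each edge of the path is inherited from the convexity of $\convv(S)$.

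For the backward direction, given a path $q_{0}\in\Start,\dots,q_{K}\in\End$, I would extract from each of the four sides the ordered list of distinct segments appearing along the path. The edge condition of $\Gamma(h,v)$ ensures each list forms a convex polygonal chain, and the $\Start$/$\End$ anchors glue its endpoints to $\Hstrip$ and $\Vstrip$; together with the already fixed portions of $\partial\,\convv(\In)$ inside the two strips, these four chains close up into a convex polygon $P$, and I would set $S=P\cap\Z^{2}$. Digital convexity, fatness with the prescribed feet, kernel-containment, and the avoidance of excluded points are then routine to verify. The main obstacle, and the place where I would concentrate the most care, is the X-ray verification: I must show that as the path sweeps from $\Start$ to $\End$, the sets $NorthRows(q_{i})$ tile without gap or overlap the north-side rows left undetermined after the filling step, and analogously for $SouthRows$, $WestColumn$, $RightColumn$, so that each such row or column has its count enforced by exactly one octagon of the path, while the rows of $\Hstrip$ and the columns of $\Vstrip$ are already correct from the filling step. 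This tiling claim reduces to a case analysis on which of the four segments changes between two consecutive octagons, combined with the extension rule that stretches $columns(s_{\NW})$ all the way to the $\West$ foot whenever the left vertex of $s_{\NW}$ lies in $\Hstrip$ (and its symmetric analogues), which is precisely what closes the coverage at the first and last octagons.
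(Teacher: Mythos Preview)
Your approach is essentially the paper's: forward, decompose the boundary of a solution into the four chains and sweep them edge by edge to produce a path of octagons from $\Start$ to $\End$; backward, take the set enclosed by the four chains together with the kernel (the paper writes this as $O=\bigcup_i q_i\cup\In$, you write it as $P\cap\Z^2$, which is the same object) and check convexity and the X-rays. One small correction: in the backward X-ray argument you plan to show that the sets $NorthRows(q_i)$ (and their analogues) tile the undetermined rows ``without gap or overlap''; drop the ``without overlap'' part, since two consecutive octagons share three of their four segments and hence their determined rows and columns overlap substantially. Only coverage (no gap) is needed, and on the overlap the counts automatically agree because the relevant segments coincide.
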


\begin{proof}
We prove Proposition \ref{redu} by showing first that if $DT_{\C \cap \F} (h,v)$ admits a solution, then there is a path  from $\Start$ to $\End$ in the graph $\Gamma(h,v)$ and secondly  by stating that the existence of a path from $\Start$ to $\End$ provides a solution of $DT_{\C \cap \F} (h,v)$.

We first assume first that $DT_{\C \cap \F} (h,v)$ admits a solution $S$ with the considered position of the feet. The filling step provides the set of the undetermined points. According to \cite{RSC}, the undetermined points are organized in four borders $\NW$, $\NE$, $\SE$ and $\SW$ separated by a vertical strip $\Vstrip$ and a horizontal strip $\Hstrip$ of fully determined rows and columns.
Then the undetermined part of the solution $S$ can be decomposed in a sequence of octagons $q_i$ for $0\leq i \leq k$ as drawn in Fig.~\ref{soltopath}. Different sequences of consecutive octagons can be obtained.  It depends on the order chosen to move the segments. Each one provides a path from $\Start$ to $\End$.

\begin{figure}[!h]
\vspace*{2mm}
  \begin{center}
		\includegraphics[width=0.71\textwidth]{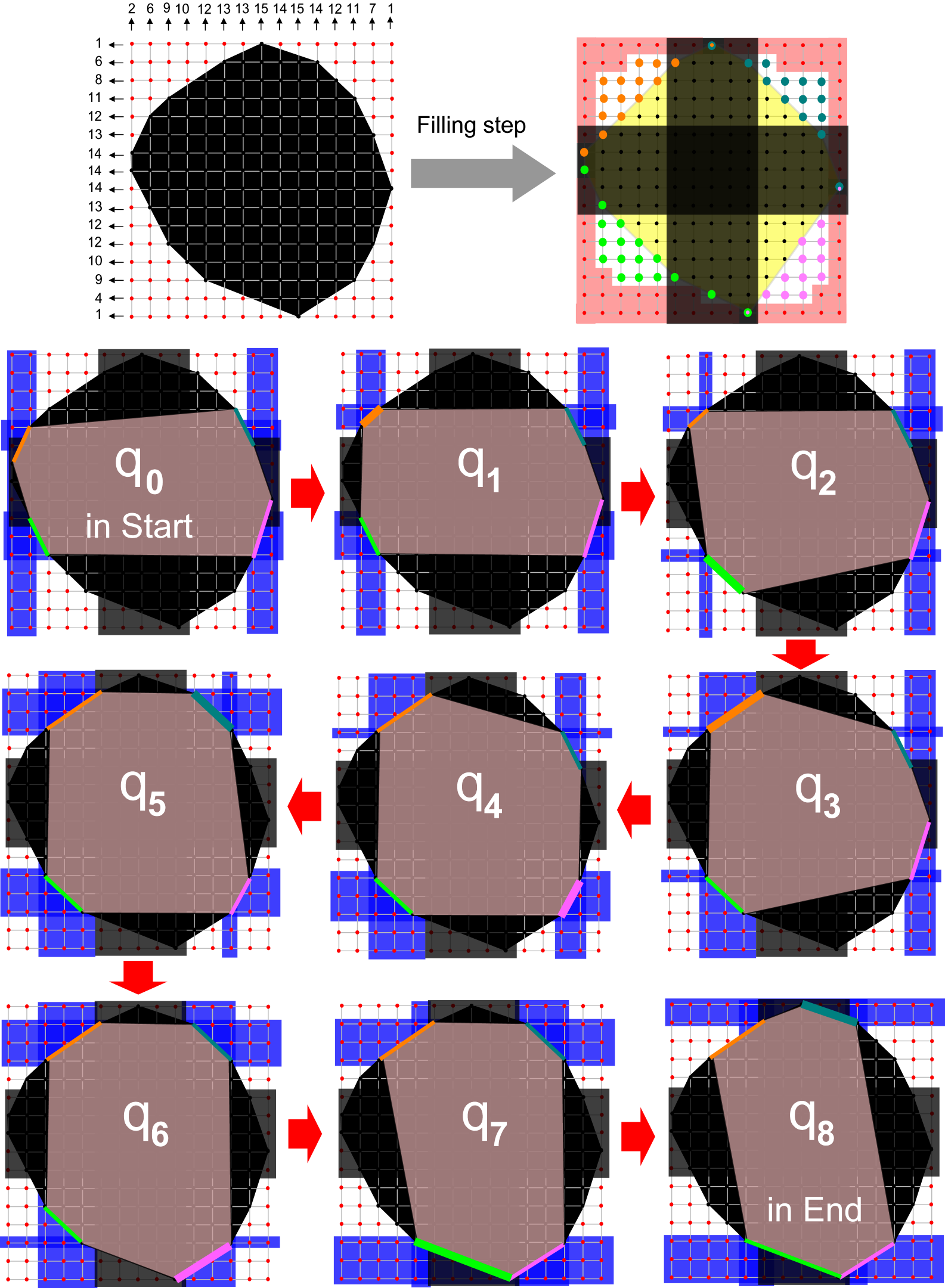}
	\end{center}\vspace*{-2mm}
	\caption{\label{soltopath} \textbf{From a solution to a path in $\Gamma (h,v)$.}}\vspace*{-5mm}
\end{figure}

\medskip
Conversely, we assume that we have a path of consecutive octagons $(q_i)_{1\leq i \leq k}$ in $\Gamma (h,v)$ with its extreme octagons $q_0$ in $\Start$ and $q_k$ in $\End$. We define the set $O$ as the union of two sets. The first one contains the integer points in the octagons of the path. The second one is the kernel. Then we have  $O= \bigcup _{0\leq i \leq k}q_i \cup \In$. We now show that $O$ is digital convex and has the prescribed X-rays.

The kernel is digital convex since one of the filling operations is $\In \leftarrow \conv (\In)$.
The set $\bigcup _{0\leq i \leq k}q_i$ is also digital convex. It follows from the condition of having a convex turn between consecutive octagons in the graph $\Gamma (h,v)$.
 We notice also that except in the horizontal and vertical strips $\Hstrip$ and $\Vstrip$, the convex hull of the kernel is included in the union of the octagons $\bigcup _{0\leq i \leq k}q_i$. In other words, the union of the octagons is a convex layer around the  kernel outside from $\Hstrip$ and $\Vstrip$.

 The unique points where we could have a non-convexity are the vertices of the convex hull of the kernel where the octagons $q_0$ and $q_k$ are branched. The condition that the support lines of the segments of $q_0$ and $q_k$ do not separate the kernel guarantees that the junction is convex. Then the set $O$ is digital convex.

We prove now that the horizontal and vertical X-rays of $O$ are equal to the prescribed X-rays $h$ and $v$. It is straightforward for the rows and columns of the two central strips $\Hstrip$ and $\Vstrip$ since there is no undetermined point in this region. For the remaining rows and columns, each octagon $q_i$ guarantees the X-rays for the rows  $NorthRows(q_i)$,  $SouthRows(q_i)$ and columns $WestColumn(q_i)$, $RightColumn(q_i)$. It remains to notice that by going from $q_0$ to $q_k$, these guaranteed rows and columns cover all the raws and columns which are not in the already determined strips $\Hstrip$ and $\Vstrip$.
\end{proof}

\paragraph{\texttt{DAGTomo2} Aggregation Algorithm.}

The \texttt{DAGTomo2} aggregation algorithm is based on Proposition \ref{redu} reducing the aggregation problem to the research of a path from $\Start$ to $\End$ in the graph $\Gamma(h,v)$. The strategy is simple:
\begin{enumerate}
    \item Compute the octagons in $\Start$.
    \item Compute the octagons which can be reached from $\Start$  with a Depth first search (each time a new octagon is reached, the algorithm records its previous neighbor so that at the end we can backtrack the path).
\end{enumerate}

The algorithm admits two variants which depend on which part of the graph $\Gamma(h,v)$ is precomputed and stored. In any case, the algorithm requires to store a set of octagons reached from $\Start$. The size of this set is $O(m^4n^4)$.
In the variant that we consider, the vertices of the graph $\Gamma(h,v)$ are precomputed but not the edges.
The relations between the vertices are computed on the fly.
The precomputation of the vertices of $\Gamma(h,v)$ is done by considering all the potential octagons. We have $O(m^4n^4)$  quadruplets of segments $s_\NW$, $s_\NE$, $s_ \SE$, $s_\SW$. Testing whether a quadruplet of segments defines a valid octagon requires to test two criterion:
\begin{enumerate}
\item Test whether the segments do not include an excluded point. It takes $O(m+n)$ time by considering only the extreme excluded points.
\item Compare the X-rays of the octagon for the rows and columns that it fully determines with the input X-rays $h$ and $v$. This comparison takes $O(m+n)$.
\end{enumerate}
Then the construction of the set of vertices of $\Gamma(h,v)$ takes $O((m+n)m^4n^4)$.

Given an octagon $q$, the computation of its neighbors $q'$ is done on the fly in the following way:
The neighbors of an octagon are obtained by moving only one of its vertices. The number of new vertices is $O(mn)$. It makes $O(mn)$ potential neighbors $q'$ for which we test the convexity of the union $q\cup q'$ and whether $q'$ is a vertex of $\Gamma(h,v)$. It takes a constant time.

\subsection{Analysis}

We now analyze the worst-case time complexity of the algorithm \texttt{DAGTomo2}.
The number of feet positions being bounded by $O(m^2n^2)$, the algorithm executes at most $O(m^2n^2)$ reconstruction procedures.

The reconstruction procedure starts with the filling step. The total time of the new filling operations related to the digital convexity is $O(m^2n^2 \log (m+n))$. The total time of the other filling operations is $O(mn)$. It provides a bound $O(m^2n^2 \log (m+n))$ for the filling step.

The reconstruction procedure ends with the aggregation step. The time complexity of the Depth first search for a given graph is $O(|V|+|E|)$ where $|V|$ is the total number of vertices  and $|E|$ the total number of edges of the graph $\Gamma(h,v)$. The complexity of the algorithm is not affected by the computation on the fly of the edges since it is done in constant time. As we have $|V|=O(m^4n^4)$ and at most $O(mn)$ neighbors per vertex, we have $|V|=O(m^5n^5)$.
By repeating the reconstruction procedure at most $m^2n^2$ times, the complexity of \texttt{DAGTomo2} is
$O(m^7n^7)$. It proves Theorem \ref{t1}.

\section*{Perspectives}

We have shown that the DAG strategy used in the algorithms \texttt{DAGTomo2} and \texttt{DAGTomo2} allows us to  solve $DT_{\C } (v)$ and $DT_{\C \cap \F} (h,v)$ in polynomial time. It is a step beyond the previously known results.
Can it be used to fully solve the challenging problem $DT_{\C} (h,v)$ or at least some other configurations of the feet than the fat configuration f) of Fig.~\ref{cycles}? The complexity of the reconstruction of digital convex lattice sets from a pair of X-rays remains an open question but there is still hope that new results can be derived from the ideas presented in this paper or from other breakthroughs.

\end{document}